\DeclareMathOperator{\tr}{tr}
\DeclareMathOperator{\diag}{diag}
\renewcommand\bra[1]{{\langle{#1}|}}
\renewcommand\ket[1]{{|{#1}\rangle}}
\date{}
\newcommand{\floor}[1]{\left\lfloor #1 \right\rfloor}
\newtheorem{theorem}{Theorem}
\newtheorem{lemma}[theorem]{Lemma}
\newtheorem{corollary}[theorem]{Corollary}
\theoremstyle{definition}
\newtheorem{example}[theorem]{Example}
\begin{document}
\pagenumbering{arabic}

\title{\textbf{Entangling power of multipartite unitary gates}}
\author{Tomasz Linowski$^{1,2}$\footnote{linowski@cft.edu.pl}, Grzegorz Rajchel-Mieldzio\'c$^1$, Karol \.{Z}yczkowski$^{1,3}$}
\date{%
    \small $^1$Center for Theoretical Physics, Polish Academy of Sciences, Al. Lotnik{\'o}w 32/46, 02-668 Warszawa, Poland\\%
    \small $^2$International Centre for Theory of Quantum Technologies, University of Gdansk, ul. Wita Stwosza 63, 80-308 Gda{\'n}sk, Poland \\ 
    \small $^3$Smoluchowski Institute of Physics, Jagiellonian University, ul. {\L{}}ojasiewicza 11, 30-348 Krak{\'o}w, Poland\\[2ex]%
    \today
}

\maketitle

\begin{abstract}
We study the entangling properties of multipartite unitary gates with respect to the measure of entanglement called \emph{one-tangle}. Putting special emphasis on the case of three parties, we derive an analytical expression for the entangling power of an $n$-partite gate as an explicit function of the gate, linking the entangling power of gates acting on $n$-partite Hilbert space of dimension $d_1 \ldots d_n$ to the entanglement of pure states in the Hilbert space of dimension $(d_1 \ldots d_n)^2$. Furthermore, we evaluate its mean value averaged over the unitary and orthogonal groups, analyze the maximal entangling power and relate it to the absolutely maximally entangled (AME) states of a system with $2n$ parties. Finally, we provide a detailed analysis of the entangling properties of three-qubit unitary and orthogonal gates.
\end{abstract}

\section{Introduction}
Quantum entanglement is among the most important resources of modern quantum technologies. Along with quantum superposition, it is the main tool needed for superdense coding \cite{superdense_coding}, quantum teleportation \cite{quantum_teleportation}, efficient quantum tomography, measurement precision beyond classical limit \cite{measurement_heisenberg_limit}, and many other practical applications \cite{quantum_information_book,quantum_algorithms}.

In quantum computation, the computation algorithms are realized with quantum gates -- unitary transformations $U$ acting on the state-space of the system, that describe non-trivial interactions between different subsystems. In this context, a crucial property of a given gate acting on a composed quantum system is its \emph{entangling power} $\epsilon_\tau(U)$, which is defined  \cite{Zanardi_introduction} as the average entanglement created by the gate when acting on a generic fully separable pure state $\ket{\psi_{\textrm{sep}}}$:
\begin{equation} \label{eq:epsilon_states}
\begin{split}
\epsilon_\tau(U) \coloneqq \braket{\tau
	\big(U\ket{\psi_{\textrm{sep}}}\big)}
	_{\ket{\psi_{\textrm{sep}}}\in\mathcal{H}}.
\end{split}
\end{equation}
Here $\tau$ denotes the chosen measure of entanglement and the brakets $\braket{\cdot}$ represent the average over the set of pure states in the Hilbert space $\mathcal{H}$ with respect to the unique, unitarily invariant measure. For instance, characterizing the entanglement of an initially separable state by its negativity \cite{Zy99} allows one to introduce a quantity accessible in an experiment \cite{entangling_efficiency}.

In general, independently of the measure $\tau$ selected, there exists no closed-form expression for the entangling power as a function of the gate. In other words, given a certain gate~$U$, its entangling power has to be computed numerically by performing the~average~(\ref{eq:epsilon_states}), which is time-consuming and yields only approximate results.

The key exception is the class of bipartite gates of dimension $d\times d$. For these systems, Zanardi \cite{Zanardi_formula} derived an expression relating the entangling power of the gate (\ref{eq:epsilon_states}) with its linear entropy:
\begin{equation} \label{eq:epsilon_Zanardi}
\begin{split}
\epsilon_{S_L}(U) = \frac{d}{d+1}\left[S_L(U)+S_L(US)-S_L(S)\right],
\end{split}
\end{equation}
where $S_L$ is the linear entanglement entropy of the gate, determined by the purity of its Schmidt vector in the operator Schmidt decomposition, and $S$ is the SWAP gate. Note that above, the linear entropy is used in two different meanings with two different arguments: on the left as the chosen measure of entanglement of the pure state $S_L\big(\ket{\psi}\big)$, and on the right as a function of the unitary matrix $S_L(U)$. Needless to say, for most applications the expression of Zanardi is superior to the original definition~(\ref{eq:epsilon_states}), allowing one to make precise, analytical statements regarding the entangling power of bipartite quantum gates.

In this contribution, we generalize the notion of entangling power to multipartite unitary gates. Investigation of the problem was started by Scott \cite{Scott}, who derived formulae for entangling power in the case of several subsystems of an equal dimension $d$. These results are particularly important in the context of quantum correction codes \cite{codes1,codes2,codes3}. In this work, we study entangling power from a different perspective, not as a resource for a given quantum protocol, but as a physical property of a given gate, possibly acting on subsystems of different dimensions.

More recently, investigation of the problem of the entangling power of quantum gates acting on multipartite systems was pursued by Chen et al \cite{multipartite_ent_p_minimum}. They analyzed the minimal entanglement created by a given unitary gate acting on a pure product state, described by Shannon entropy of entanglement. In this work we use the linear entropy of entanglement in the form of the so-called \emph{one-tangle} measure \cite{tangles}, which allows us to derive explicit analytical expressions for entangling power of an arbitrary unitary gate.

Multiple reasons led us to consider the linear entropy of entanglement instead of other metrics, such as the Shannon / von Neumann entropy.
Firstly, the linear entropy $S_L$ is easier to compute than the von Neumann entropy $S_V$ and it is also easier to observe: it concerns the sum of the terms $p_i^2$ which have a direct experimental interpretation of a probability that a coincidence of two independent events occurs. Secondly, the linear entropy provides a direct lower bound for the Shannon entropy, $S_V \geqslant -\log(1-S_L)$ \cite{Zyczkowski2003}. Thirdly, unlike entanglement measures such as three-tangle, the linear entropy of entanglement in the form of one-tangle is applicable to all possible systems, which is crucial for our work. Finally, aside from conceptual arguments, significant rationale for adoption of the one-tangle measure was the relative simplicity of calculations.

We start by following the strategy of Zanardi \cite{Zanardi_formula} for tripartite systems of dimension $d_1d_2d_3$. We rewrite the formula (\ref{eq:epsilon_states}) for the entangling power of~a~tripartite unitary gate with respect to one-tangle in terms of explicit functions of the unitary matrix. Besides its potential to be utilized in practical applications, such as looking for optimal entangling gates, the formula offers additional physical insight into the gates' entangling power, linking it to the entanglement of states in the extended Hilbert space of dimension $(d_1d_2d_3)^2$. These results are then generalized to unitary gates acting on a system with an arbitrary number $n$ of subsystems. A thorough analysis of the entangling properties of $n$-partite gates, in particular three-qubit gates, is also provided.

This work is organized as follows. In Section \ref{sec:theory}, we briefly characterize one-tangle -- the aforementioned measure of entanglement, as well as relevant related topics. In Section \ref{sec:formula}, we present and prove the analytical formula for the entangling power as an explicit function of the matrix~$U\in U(d_1d_2d_3)$ -- see Theorem~\ref{th:epsilon_formula_geometric}. In Section~\ref{sec:tripartite_gates}, we explore the entangling properties of general tripartite gates, including the mean entangling power averaged over ensemble of random orthogonal/unitary matrices of a fixed size with respect to the Haar measure on the corresponding groups. Furthermore, we investigate the maximal entangling power and its relation to absolutely maximally entangled (AME) states \cite{AME_1,AME_2} of six-party systems. In Section \ref{sec:multipartite}, we extend all the previous results to~the general case of unitary gates acting on an arbitrary number of parties. In~Section \ref{sec:gate_properties}, we characterize the entangling properties of several relevant classes of three-qubit unitary gates. Finally, in Section \ref{sec:summary}, we summarize our findings and propose some related open problems.

\section{Tripartite entanglement}\label{sec:theory}
We begin with tripartite systems, $\mathcal{H}=\mathcal{H}_1\otimes\mathcal{H}_2\otimes\mathcal{H}_3$, $\dim\mathcal{H}_i=d_i$. As the measure of entanglement we choose the \emph{one-tangle} \cite{tangles,geometry_of_quantum_states}, defined as:
\begin{equation} \label{eq:one-tangle}
\begin{split}
\tau_1\big(\ket{\psi}\big)\coloneqq\frac{1}{3}\left[
	\tau_{12|3}\big(\ket{\psi}\big)
	+ \tau_{13|2}\big(\ket{\psi}\big)
	+ \tau_{23|1}\big(\ket{\psi}\big)\right],
\end{split}
\end{equation}
where for pure states
\begin{equation} \label{eq:tau_C}
\begin{split}
\tau_{g|g'}\big(\ket{\psi}\big)\coloneqq 
	2\left(1-\tr\big(\tr_{g}\ket{\psi}\bra{\psi}\big)^2\right)
\end{split}
\end{equation}
denotes the so-called \emph{generalized concurrence} -- a measure of entanglement with respect to the given splitting $g|g'$ of the Hilbert space. One-tangle can be thus interpreted as a measure of the total amount of entanglement in the state with respect to all bipartitions of the system. 

The range of one-tangle is defined by the range of generalized concurrence~\cite{concurrence_range}:
\begin{equation} \label{eq:tau_C_max}
\begin{split}
0\leqslant \tau_{g|g'}\big(\ket{\psi}\big) 
	\leqslant 2\frac{\min(d_g,d_{g'})-1}{\min(d_g,d_{g'})},
\end{split}
\end{equation}
where $d_g$ denotes the dimension of the partition $g$, with the former attained by separable states and the latter by maximally entangled states with respect to the bipartition $g|g'$. 

Contrary to the bipartite case, in the case of tripartite systems, there is a number of locally inequivalent classes of entangled states, and even maximally entangled states \cite{three_qubits_entanglement_types} (which shows why tripartite entanglement is significantly more complex than in the bipartite case). In particular, in the case of three qubits, there are \emph{two} such classes: the GHZ class, represented by the state
\begin{equation} \label{eq:GHZ}
\begin{split}
\ket{GHZ}\coloneqq \frac{1}{\sqrt{2}}\big(\ket{000}+\ket{111}\big),
\end{split}
\end{equation}
as well as the W class, represented by the state
\begin{equation} \label{eq:W}
\begin{split}
\ket{W}\coloneqq \frac{1}{\sqrt{3}}\big(\ket{001}+\ket{010}+\ket{100}\big).
\end{split}
\end{equation}
Intuitively, the entanglement in GHZ states can be understood as ``genuine'' tripartite entanglement. If one qubit is traced out, the resulting state is separable. The entanglement in W states, on the other hand, is more akin to bipartite entanglement. If one qubit is traced out, the resulting state is still entangled. The amount of entanglement in~the~GHZ and~the~W~states with respect to one-tangle is equal to $1$ and $8/9$, respectively, the former of which is in this case maximal and equal~to~the right hand side of inequality (\ref{eq:tau_C_max}).

We stress that one-tangle is not capable of distinguishing between different
types of entanglement. Rather, it is concerned with the total amount of entanglement in the state, which is in the main focus of this work. 
Working with one-tangle makes it possible to derive results for the entangling power of any gate acting on a tripartite as well as general multipartite systems. In scenarios in which one is interested in a specific entanglement type,
such as the GHZ type, one needs to apply other measures of entanglement, such as three-tangle \cite{tangles,geometry_of_quantum_states}.

\section{Entangling power of a given tripartite gate}\label{sec:formula}
We state the main result of this work -- an analytical formula for the entangling power of unitary matrices, in two steps. Firstly, in Lemma \ref{th:epsilon_formula_basis}, we write and prove the formula in a basis-explicit form. Then, in Theorem \ref{th:epsilon_formula_geometric}, we rephrase it in geometric terms. 
There are two reasons for such a choice: firstly, dividing the statement into two steps should make it more accessible for the reader. Secondly, and more importantly, for some purposes, including several results stated further in this work, the basis-explicit form is more practical than its geometric counterpart.

To this end, we observe that~the matrix elements of any unitary operator $U\in SU(d_1d_2d_3)$ acting in the Hilbert space $\mathcal{H}=\mathcal{H}_1\otimes\mathcal{H}_2\otimes\mathcal{H}_3$, $\dim\mathcal{H}_i=d_i$, can be conveniently written in a six-index notation as
\begin{equation} \label{eq:U_basis}
\begin{split}
U^{j_1j_2j_3}_{j_{1'}j_{2'}j_{3'}}
	\coloneqq\bra{j_1j_2j_3}U\ket{j_{1'}j_{2'}j_{3'}},
\end{split}
\end{equation}
where $j_i,{j_{i'}}\in\{0,d_i-1\}$.

Before we proceed, we note that the Einstein summation convention is used throughout the whole work, i.e. repeating indices are summed upon.

We can now state the following.

\begin{lemma}\label{th:epsilon_formula_basis}
The definition (\ref{eq:epsilon_states}) of the entangling power for a tripartite system with one-tangle (\ref{eq:one-tangle}) as the entanglement measure is equivalent to
\begin{equation} \label{eq:epsilon_definition}
\begin{split}
\epsilon_1(U)\equiv\epsilon_{\tau_1}(U)=\frac{1}{3}\left[
	\epsilon_{12|3}(U)+\epsilon_{13|2}(U)+\epsilon_{23|1}(U)\right],
\end{split}
\end{equation}
where
\begin{equation} \label{eq:epsilon_abc_basis}
\begin{split}
\epsilon_{ab|c}(U)
 \coloneqq \braket{\tau_{ab|c}
	\big(U\ket{\psi_{\textnormal{sep}}}\big)}
	_{\ket{\psi_{\textnormal{sep}}}\in\mathcal{H}} =
	2\Bigg[1-\Bigg(\prod_{i=1}^3\frac{1}{d_i(d_i+1)}\Bigg)
	u_{\vec{r}}\:u_{\vec{s}}\:u_{\vec{t}}\:
	f_{\vec{r},\vec{s},\vec{t}}^{ab|c}(U)\Bigg]
\end{split}
\end{equation}
defines the entangling power of the gate $U$ with respect to the bipartition $ab|c$. Above, $u_{\vec{v}}\coloneqq\delta^{v_1}_{v_2}\delta^{v_3}_{v_4}+\delta^{v_1}_{v_4}\delta^{v_3}_{v_2}$, while
\begin{equation} \label{eq:f_indices}
\begin{split}
f_{\vec{r},\vec{s},\vec{t}}^{ab|c}(U)&\coloneqq
	\delta^{i_a}_{l_a}\delta^{i_b}_{l_b}\delta^{i_c}_{j_c}
	\delta^{k_a}_{j_a}\delta^{k_b}_{j_b}\delta^{k_c}_{l_c}\:
	U_{r_1s_1t_1}^{i_1i_2i_3}\,
	\left(U^\dag\right)_{j_1j_2j_3}^{r_2s_2t_2}\,
	U_{r_3s_3t_3}^{k_1k_2k_3}\,
	\left(U^\dag\right)_{l_1l_2l_3}^{r_4s_4t_4}.
\end{split}
\end{equation}
We emphasize that implied summation over \emph{all} possible four-component vectors $\vec{r}$, $\vec{s}$, $\vec{t}$ is taken in eq.(\ref{eq:epsilon_abc_basis}), with vector elements spanned by $\{0,\ldots,d_i-1\}$.
\end{lemma}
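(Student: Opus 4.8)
The plan is to reduce the average over separable inputs to an elementary second-moment integral performed independently on each single-party factor, and then to recognize the surviving index contractions as the tensors $u$ and $f$. First I would substitute the product form $\ket{\psi_{\textnormal{sep}}}=\ket{\phi_1}\otimes\ket{\phi_2}\otimes\ket{\phi_3}$, with each $\phi_i$ a normalized vector in $\mathcal{H}_i$, into the output state $\ket{\Phi}=U\ket{\psi_{\textnormal{sep}}}$, so that in components $\Phi^{i_1i_2i_3}=U^{i_1i_2i_3}_{r_1s_1t_1}\,\phi_1^{r_1}\phi_2^{s_1}\phi_3^{t_1}$. The only remaining freedom is the choice of the three single-party vectors, which is exactly the freedom integrated over in (\ref{eq:epsilon_abc_basis}).

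Next, for a fixed bipartition $ab|c$ I would make the generalized concurrence (\ref{eq:tau_C}) explicit. Tracing out the pair $ab$ gives the reduced state $(\rho_c)_{\mu\nu}=\Phi^{\alpha\beta\mu}\big(\Phi^{\alpha\beta\nu}\big)^{*}$ (summed over the $ab$ indices $\alpha,\beta$), and its purity $\tr(\rho_c^2)$ is a quartic form built from two copies of $U$ and two copies of $U^\dag$, dressed by a fixed pattern of Kronecker deltas that encodes (i) the contraction of the $ab$ indices inside each factor $\rho_c$ and (ii) the cyclic identification of the $c$ indices produced by squaring and taking the trace. Collecting these six deltas is precisely how the combination $\delta^{i_a}_{l_a}\delta^{i_b}_{l_b}\delta^{i_c}_{j_c}\delta^{k_a}_{j_a}\delta^{k_b}_{j_b}\delta^{k_c}_{l_c}$ in (\ref{eq:f_indices}) arises; any equivalent relabelling of the dummy summation indices produces the same scalar.

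The decisive step is the average over the inputs. Because the factors $\phi_1,\phi_2,\phi_3$ are independent and each is distributed according to the unitarily invariant measure on the unit sphere of $\mathcal{H}_i$, the expectation factorizes across the three parties, and for each party I would invoke the standard second-moment identity
\begin{equation}
\Big\langle \phi^{a}\,\big(\phi^{b}\big)^{*}\,\phi^{c}\,\big(\phi^{d}\big)^{*}\Big\rangle_{\phi\in\mathbb{C}^{d}}
= \frac{1}{d(d+1)}\left(\delta^{a}_{b}\delta^{c}_{d}+\delta^{a}_{d}\delta^{c}_{b}\right).
\end{equation}
For subsystems $1,2,3$ the four input labels carried by the two copies of $U$ and the two copies of $U^\dag$ are exactly the four components of $\vec r,\vec s,\vec t$ in the natural order $(\phi,\phi^{*},\phi,\phi^{*})$, so this identity yields the prefactor $\prod_{i}1/[d_i(d_i+1)]$ together with the symmetric tensors $u_{\vec r}\,u_{\vec s}\,u_{\vec t}$, with $u_{\vec v}=\delta^{v_1}_{v_2}\delta^{v_3}_{v_4}+\delta^{v_1}_{v_4}\delta^{v_3}_{v_2}$. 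Reinserting this into (\ref{eq:tau_C}), keeping the overall factor $2$ and the constant $1$ from the definition of the generalized concurrence, reproduces (\ref{eq:epsilon_abc_basis}); since averaging is linear, feeding the three bipartite terms into the definition (\ref{eq:one-tangle}) of one-tangle gives (\ref{eq:epsilon_definition}).

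I expect the main obstacle to be purely combinatorial bookkeeping rather than analysis: one must verify that the six Kronecker deltas in $f$ faithfully reproduce the partial trace over $ab$ and the squaring in $\tr(\rho_c^2)$, and that the four input indices of each party match the two arguments of $u$ in the correct order, so that no spurious contraction or transposition slips in. Once the moment identity above is granted, everything else is careful index tracking, carried out in parallel for the three bipartitions $12|3$, $13|2$ and $23|1$.
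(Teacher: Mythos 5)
Your proposal is correct and follows essentially the same route as the paper: express $\tau_{ab|c}$ of the output state as a quartic form in $U$, $U^\dag$ contracted against fourth-order input data, then factorize the average over the three parties and apply the second-moment identity to obtain the prefactor $\prod_i 1/[d_i(d_i+1)]$ and the tensors $u_{\vec r}u_{\vec s}u_{\vec t}$. The only cosmetic difference is that the paper parameterizes the separable inputs as $(U_1\otimes U_2\otimes U_3)\ket{000}$ and integrates over local unitaries via the degree-two Weingarten formula, whereas you average directly over Haar-random unit vectors $\phi_i$ — an equivalent computation, since the relevant moments are those of a single column of a Haar-random unitary.
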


\begin{proof}
Instead of averaging over states, we can average over local unitaries acting on some chosen separable state $\ket{\psi_0}$. In other words, we can rewrite definition~(\ref{eq:epsilon_states})~as
\begin{equation} \label{eq:epsilon_gates}
\begin{split}
\epsilon_\tau(U) = \braket{\tau
	\big[U \left(U_1 \otimes U_2 \otimes U_3\right)
	\ket{\psi_0}\big]}_{U_i\in SU(d_i)},
\end{split}
\end{equation}
where we have assumed a tripartite system. It is clear from the definition of one-tangle (\ref{eq:one-tangle}), that the entangling power (\ref{eq:epsilon_gates}) with one-tangle as the input is the sum of three terms of the form
\begin{equation} \label{eq:epsilon_ABC_appendix}
\begin{split}
\epsilon_{ab|c}(U)\coloneqq
	\braket{\tau_{ab|c}
	\big[U \left(U_1 \otimes U_2 \otimes U_3\right)
	\ket{\psi_0}\big]}_{U_i\in SU(d_i)}.
\end{split}
\end{equation}
All we need to do is to show that the above quantity has the conjectured form~(\ref{eq:epsilon_abc_basis}).

The proof consists of two steps. 
In the first step we compute $\tau_{ab|c}$. 
With no loss of generality, we choose the basis of the Hilbert space to be such that $\ket{\psi_0}\eqqcolon\ket{000}$ is its first element. Then,
\begin{equation}
\begin{split}
	U \left(U_1 \otimes U_2 \otimes U_3\right)
	\ket{\psi_0}=U^{a_1a_2a_3}_{b_1b_2b_3}\:
	(U_1)^{b_1}_0(U_2)^{b_2}_0(U_3)^{b_3}_0\ket{a_1a_2a_3}.
\end{split}
\end{equation}
Using this notation, one can patiently calculate $\tau_{ab|c}$ according to the definition~(\ref{eq:tau_C}): first performing the outer product $\ket{U}\bra{U}$, then the partial trace $\tr_{ab}\ket{U}\bra{U}$, next its square $\big(\tr_{ab}\ket{U}\bra{U}\big)^2$. The final result can be written as
\begin{equation} \label{eq:tau_ABC_appendix}
\begin{split}
\tau_{ab|c}\big[U \left(U_1 \otimes U_2 \otimes U_3\right)
	\ket{\psi_0}\big] = 2\bigg[1-
	(U_1)^{\vec{r}}(U_2)^{\vec{s}}(U_3)^{\vec{t}}\:
	f_{\vec{r},\vec{s},\vec{t}}^{ab|c}(U)\bigg],
\end{split}
\end{equation}
where
\begin{equation}
\begin{split}
(U_i)^{\vec{v}}&
	\coloneqq(U_i)^{v_1}_0(U_i^\dag)_{v_2}^0(U_i)^{v_3}_0(U_i^\dag)_{v_4}^0,
\end{split}
\end{equation}
while the functions $f_{\vec{r},\vec{s},\vec{t}}^{ab|c}(U)$ are defined in eq. (\ref{eq:f_indices}).

The second step is to compute $\epsilon_{ab|c}$. By definition (\ref{eq:epsilon_ABC_appendix}),
\begin{equation}
\begin{split}
\epsilon_{ab|c}(U)
	= \int dU_1 \int dU_2 \int dU_3 \; 
	\tau_{ab|c}\big[U \left(U_1 \otimes U_2 \otimes U_3\right)
	\ket{\psi_0}\big],
\end{split}
\end{equation}
where the integration is to be performed in accordance with the normalized Haar measure on the unitary group. Formula (\ref{eq:tau_ABC_appendix}) implies that
\begin{equation}\label{eq:tau_ABC_appendix_mean}
\begin{split}
\epsilon_{ab|c}(U) 
	= 2\bigg[1-
	f_{\vec{r},\vec{s},\vec{t}}^{ab|c}(U)
	\int dU_1 (U_1)^{\vec{r}}
	\int dU_2 (U_2)^{\vec{s}}
	\int dU_3 (U_3)^{\vec{t}}\bigg],
\end{split}
\end{equation}
where we have moved the function $f_{\vec{r},\vec{s},\vec{t}}^{ab|c}(U)$ in front of the integrals to emphasize that it does not depend on matrices $U_i$, $i=1,2,3$. Now, all three mutually independent integrals can be easily calculated using the handy result from \cite{unitary_integration_original,unitary_integration} regarding the integration of the second moments of random unitary matrices over the relevant unitary group:
\begin{equation} \label{eq:symbolic_integration}
\begin{split}
\int_{U(d)} dU U^{i_1}_{j_1} (U^\dag)_{i_1'}^{j_1'} U^{i_2}_{j_2} (U^\dag)_{i_2'}^{j_2'}
	= \frac{1}{d(d^2-1)}\bigg[&d\left(
	\delta^{i_1}_{i_1'}\delta^{i_2}_{i_2'}\delta^{j_1'}_{j_1}\delta^{j_2'}_{j_2}
	+ \delta^{i_1}_{i_2'}\delta^{i_2}_{i_1'}\delta^{j_2'}_{j_1}\delta^{j_1'}_{j_2}
	\right) \\
& - \left( 
	\delta^{i_1}_{i_1'}\delta^{i_2}_{i_2'}\delta^{j_2'}_{j_1}\delta^{j_1'}_{j_2}
	+ \delta^{i_1}_{i_2'}\delta^{i_2}_{i_1'}\delta^{j_1'}_{j_1}\delta^{j_2'}_{j_2}
	\right)\bigg].
\end{split}
\end{equation}
In the case at hand,
\begin{equation}
\begin{split}
\int_{U(d_i)} dU_i (U_i)^{\vec{v}}=\frac{1}{d_i(d_i+1)}\big(
\delta^{v_1}_{v_2}\delta^{v_3}_{v_4}+\delta^{v_1}_{v_4}\delta^{v_3}_{v_2}\big).
\end{split}
\end{equation}
Substituting into eq. (\ref{eq:tau_ABC_appendix_mean}), we immediately obtain the conjectured formula~(\ref{eq:epsilon_abc_basis}).
\end{proof}

In order to restate formula (\ref{eq:epsilon_abc_basis}) in geometric terms, we observe that to~every unitary operator $U\in SU(d_1d_2d_3)$ acting in the Hilbert space $\mathcal{H}=\mathcal{H}_1\otimes\mathcal{H}_2\otimes\mathcal{H}_3$, $\dim\mathcal{H}_i=d_i$, there corresponds a pure state $\ket{U}$ in the extended Hilbert space $\mathcal{H}\otimes\mathcal{H}'$, $\mathcal{H}'=\mathcal{H}_{1'}\otimes\mathcal{H}_{2'}\otimes\mathcal{H}_{3'}$, $\dim\mathcal{H}_{i'}=\dim\mathcal{H}_{i}=d_i$. This is essentially an application of the Choi-Jamio{\l}kowski isomorphism \cite{jamiolkowski,choi} to the special case of unitary operations.

Given a basis $\{\ket{j_1j_2j_3}\}$ of the Hilbert space $\mathcal{H}$, the coefficients of the state $\ket{U}\in\mathcal{H}\otimes\mathcal{H}'$ are defined by the following relation:
\begin{equation} \label{eq:U_state}
\begin{split}
\ket{U}\coloneqq\frac{1}{\sqrt{d_1d_2d_3}}U^{j_1j_2j_3}_{j_{1'}j_{2'}j_{3'}}
	\ket{j_1j_2j_3j_{1'}j_{2'}j_{3'}},
\end{split}
\end{equation}
where the dimensional factor ensures proper normalization. Note that the equality (\ref{eq:U_state}) is valid in any basis, so the association $U\to\ket{U}$ may be viewed as a geometric statement.

We can now restate Lemma \ref{th:epsilon_formula_basis} in geometric terms:

\begin{theorem} \label{th:epsilon_formula_geometric}
Definition (\ref{eq:epsilon_states}) of the entangling power for a tripartite system with one-tangle (\ref{eq:one-tangle}) as the entanglement measure is given by eq. (\ref{eq:epsilon_definition}) with the~entangling power of the gate $U$ on the bipartition $ab|c$ (\ref{eq:epsilon_abc_basis}) equal to
\begin{equation} \label{eq:epsilon_abc_geometric}
\begin{split}
\epsilon_{ab|c}(U)=
	2\left[1-\Bigg(\prod_{i=1}^3\frac{d_i}{d_i+1}\Bigg)
	\sum_{x'|y'}
	\tr\big(\tr_{abx'}\ket{U}\bra{U}\big)^2\right].
\end{split}
\end{equation}
In the above expression, the summation is over all \emph{ordered} bipartitions $x'|y'$~of~$\mathcal{H}'$:
\begin{equation} \label{eq:xy_range}
\begin{split}
x'|y'\in\big\{1'2'3'|\cdot,\;1'2'|3',\;1'3'|2',\;2'3'|1',\;1'|2'3',\;2'|1'3',\;3'|1'2',\;\cdot|1'2'3'\big\},
\end{split}
\end{equation}
including the two trivial bipartitions $1'2'3'|\cdot$ and $\cdot|1'2'3'$, where the dot denotes an empty set, while $\ket{U}$ is the state (\ref{eq:U_state}) associated with the matrix $U$.
\end{theorem}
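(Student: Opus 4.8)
The plan is to show that the basis-explicit expression (\ref{eq:epsilon_abc_basis}) proved in Lemma \ref{th:epsilon_formula_basis} coincides term-by-term with the geometric expression (\ref{eq:epsilon_abc_geometric}). Since the prefactor $\prod_i 1/(d_i(d_i+1))$ in (\ref{eq:epsilon_abc_basis}) differs from $\prod_i d_i/(d_i+1)$ in (\ref{eq:epsilon_abc_geometric}) precisely by the normalization factor $\prod_i 1/d_i^2 = 1/(d_1 d_2 d_3)^2$, and $\ket{U}$ in (\ref{eq:U_state}) carries exactly the prefactor $1/\sqrt{d_1 d_2 d_3}$ (hence $\ket{U}\bra{U}$ carries $1/(d_1 d_2 d_3)$, and the two copies appearing in a squared partial trace supply $1/(d_1 d_2 d_3)^2$), the whole argument reduces to the purely combinatorial claim
\begin{equation} \label{eq:plan_core}
\begin{split}
u_{\vec{r}}\,u_{\vec{s}}\,u_{\vec{t}}\,f^{ab|c}_{\vec{r},\vec{s},\vec{t}}(U)
= (d_1 d_2 d_3)^2 \sum_{x'|y'}\tr\big(\tr_{abx'}\ket{U}\bra{U}\big)^2,
\end{split}
\end{equation}
where on the right the $1/(d_1 d_2 d_3)^2$ from the two factors of $\ket{U}\bra{U}$ cancels the explicit prefactor, leaving an identity between the contracted tensor on the left and the sum of squared partial traces on the right.

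First I would expand the right-hand side of (\ref{eq:plan_core}) in the computational basis using (\ref{eq:U_state}). Writing $\ket{U}\bra{U}$ out gives two factors of $U$ and two of $U^\dag$, exactly matching the four matrices appearing in $f^{ab|c}_{\vec{r},\vec{s},\vec{t}}(U)$ in eq.~(\ref{eq:f_indices}). The operation $\tr\big(\tr_{abx'}\ket{U}\bra{U}\big)^2$ then produces a specific pattern of Kronecker deltas: tracing over the subsystems $a,b$ on the unprimed side identifies the corresponding unprimed upper indices across the two copies, the trace over $x'$ identifies the primed indices in the chosen subset $x'$, and the final squaring-and-trace stitches the remaining $c$ and $y'$ indices together between the two copies. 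The key is that each individual primed subsystem $i'$ in the sum over ordered bipartitions $x'|y'$ can be assigned independently either to $x'$ (giving one pairing of its primed indices) or to $y'$ (giving the complementary pairing). Summing over both assignments for subsystem $i'$ produces exactly the symmetric combination $\delta^{v_1}_{v_2}\delta^{v_3}_{v_4} + \delta^{v_1}_{v_4}\delta^{v_3}_{v_2}$, which is the definition of $u_{\vec{v}}$.

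The crucial structural observation is therefore that the sum over all eight ordered bipartitions in (\ref{eq:xy_range}) factorizes as an independent binary choice for each of the three primed subsystems $1',2',3'$: the full sum is a product $\prod_{i=1}^3(\text{$i'\in x'$}) + (\text{$i'\in y'$})$, and each factor contributes one of the two delta-structures in $u$. This is what turns the single sum over bipartitions on the geometric side into the product $u_{\vec{r}} u_{\vec{s}} u_{\vec{t}}$ of three independent $u$-tensors on the basis-explicit side, matching the three-fold independent Haar integration (one per local subsystem) that produced the same product in the proof of Lemma \ref{th:epsilon_formula_basis}. I would make this precise by introducing four-component index vectors $\vec{r},\vec{s},\vec{t}$ for the three subsystems' primed indices across the two copies of $\ket{U}\bra{U}$, and checking that the index contractions dictated by $\tr_{abx'}$ and the outer squaring reproduce exactly the contraction pattern of $f^{ab|c}_{\vec{r},\vec{s},\vec{t}}(U)$ in (\ref{eq:f_indices}), including the placement of the $\delta^{i_a}_{l_a}$-type factors that encode the $a,b$ traces and the $c$-untraced slot.

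The main obstacle I anticipate is purely bookkeeping rather than conceptual: keeping the six-index notation of (\ref{eq:U_basis}) consistent across the two copies of $U$ and $U^\dag$, and verifying that the trivial bipartitions $1'2'3'|\cdot$ and $\cdot|1'2'3'$ are correctly included so that the factorization over subsystems is complete (omitting them would break the product structure). I expect that once the binary-choice factorization of the bipartition sum is established, the remaining verification that the surviving deltas on the $a,b,c$ slots agree with the $\delta^{i_a}_{l_a}\delta^{i_b}_{l_b}\delta^{i_c}_{j_c}\delta^{k_a}_{j_a}\delta^{k_b}_{j_b}\delta^{k_c}_{l_c}$ prefactor in (\ref{eq:f_indices}) is a direct, if tedious, matching of index labels, so no genuinely hard step remains beyond careful tracking of the contractions.
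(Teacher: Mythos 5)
Your proposal is correct and takes essentially the same route as the paper, whose proof is precisely this ``direct calculation in a chosen basis'' comparing (\ref{eq:epsilon_abc_geometric}) with (\ref{eq:epsilon_abc_basis}). Your observation that the sum over the eight ordered bipartitions $x'|y'$ factorizes into an independent binary choice per primed subsystem, each choice summing to one $u$-tensor so that the whole sum reproduces $u_{\vec{r}}u_{\vec{s}}u_{\vec{t}}$ (with the normalization of $\ket{U}$ accounting for the $\prod_i d_i^{-2}$ discrepancy in prefactors), is exactly the key step the paper's one-line proof leaves implicit.
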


\begin{proof}
The proof consists of a relatively lengthy but straightforward direct calculation of the expression (\ref{eq:epsilon_abc_geometric}) in any chosen basis and comparison with the~formula (\ref{eq:epsilon_abc_basis}).
\end{proof}

It is worth emphasizing that while the auxiliary Hilbert space $\mathcal{H}'$ is formally an identical copy of the original Hilbert space $\mathcal{H}$, from the point of view of this work, the two play different roles. In particular, it makes no sense to consider any divisions of $\mathcal{H}$ other than the only three unique, ordered bipartitions $ab|c\in\{12|3,\;13|2,\;23|1\}$. However, in order to properly account for all the divisions of the total Hilbert space $\mathcal{H}\otimes\mathcal{H}'$, it is necessary to consider all eight, unordered bipartitions $x'|y'$ of $\mathcal{H}'$ as in eq. (\ref{eq:xy_range}). We stress that this convention is used throughout the rest of this work.

\section{Statistical properties of ensembles of tripartite gates}\label{sec:tripartite_gates}
After establishing an explicit formula (\ref{eq:epsilon_abc_geometric}) for the entangling power of tripartite gates, we would like to explore some of its consequences regarding the entangling properties of general tripartite unitary gates.

We begin with a formula for the upper bound for the entangling power.

\begin{theorem} \label{th:epsilon_max}
The entangling power of tripartite unitary gates $U\in U(d_1d_2d_3)$ is bounded from above by
\begin{equation} \label{eq:epsilon_max_conjecture}
\begin{split}
\tilde{\epsilon}_{1}\coloneqq\frac{1}{3}\left(
    \tilde{\epsilon}_{12|3}+\tilde{\epsilon}_{13|2}+\tilde{\epsilon}_{23|1}\right),
\end{split}
\end{equation}
where $\tilde{\epsilon}_{ab|c}$ is the upper bound for the~entangling power of the gate $U$ on the bipartition $ab|c$ (\ref{eq:epsilon_abc_geometric}):
\begin{equation} \label{eq:epsilon_max}
\begin{split}
\tilde{\epsilon}_{ab|c}\coloneqq 2
	-2\Bigg(\prod_{i=1}^3\frac{d_i}{d_i+1}\Bigg)
	\bigg[8-\sum_{x'|y'}
	\frac{\min(d_{abx'},d_{cy'})-1}{\min(d_{abx'},d_{cy'})}\bigg],
\end{split}
\end{equation}
where the summation is over $x'|y'$ as in (\ref{eq:xy_range}) and $d_{abx'}$, $d_{cy'}$ denote the dimensions of the respective bipartitions of the extended Hilbert space $\mathcal{H}\otimes\mathcal{H}'$.
\end{theorem}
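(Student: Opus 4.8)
The plan is to derive the bound termwise from the geometric formula (\ref{eq:epsilon_abc_geometric}) and then average over the three bipartitions. The guiding observation is that $\epsilon_{ab|c}(U)$ depends on $U$ only through the purity sum $\sum_{x'|y'}\tr(\tr_{abx'}\ket{U}\bra{U})^2$, which enters (\ref{eq:epsilon_abc_geometric}) multiplied by the fixed positive prefactor $\prod_{i=1}^3 d_i/(d_i+1)$ and subtracted from $2$. Because this prefactor is positive, maximizing $\epsilon_{ab|c}(U)$ is equivalent to minimizing the purity sum, so it suffices to bound each summand from below and add up the bounds.

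First I would rewrite each summand using the definition of the generalized concurrence (\ref{eq:tau_C}). The reduced state $\tr_{abx'}\ket{U}\bra{U}$ lives on the complementary subsystem $cy'$, where $y'$ is the complement of $x'$ in $\mathcal{H}'$, so its purity equals $\tr(\tr_{abx'}\ket{U}\bra{U})^2 = 1-\tfrac{1}{2}\tau_{abx'|cy'}(\ket{U})$, with $abx'|cy'$ the corresponding bipartition of the extended space $\mathcal{H}\otimes\mathcal{H}'$. Summing over the eight ordered bipartitions (\ref{eq:xy_range}) then gives $\sum_{x'|y'}\tr(\tr_{abx'}\ket{U}\bra{U})^2 = 8-\tfrac{1}{2}\sum_{x'|y'}\tau_{abx'|cy'}(\ket{U})$.

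Next I would invoke the range of the generalized concurrence (\ref{eq:tau_C_max}). Since $\ket{U}$ is a pure state of $\mathcal{H}\otimes\mathcal{H}'$, each term obeys $\tau_{abx'|cy'}(\ket{U}) \leqslant 2[\min(d_{abx'},d_{cy'})-1]/\min(d_{abx'},d_{cy'})$, where $d_{abx'}$ and $d_{cy'}$ are the dimensions of the two sides. Substituting termwise lower-bounds the purity sum by $8-\sum_{x'|y'}[\min(d_{abx'},d_{cy'})-1]/\min(d_{abx'},d_{cy'})$; feeding this back into (\ref{eq:epsilon_abc_geometric}) yields exactly $\tilde{\epsilon}_{ab|c}$ of (\ref{eq:epsilon_max}). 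Finally, averaging the three inequalities $\epsilon_{ab|c}(U)\leqslant\tilde{\epsilon}_{ab|c}$ over $ab|c\in\{12|3,13|2,23|1\}$ via (\ref{eq:epsilon_definition}) produces $\epsilon_1(U)\leqslant\tilde{\epsilon}_1$ as stated in (\ref{eq:epsilon_max_conjecture}).

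The inequality itself is routine; the only care needed is the bookkeeping of the dimensions $d_{abx'},d_{cy'}$ for each of the eight bipartitions, including the two trivial ones. The genuinely delicate point, which the upper-bound claim fortunately sidesteps, is whether the bound is \emph{attainable}: saturating it would require a single Choi state $\ket{U}$ with $U$ unitary whose concurrences are simultaneously maximal across all eight bipartitions. Whether such a state exists is not guaranteed by the termwise argument and is governed by the existence of suitable absolutely maximally entangled states of the associated six-party system; I expect this achievability question, rather than the inequality, to be the real substance lurking behind the statement.
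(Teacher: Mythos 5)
Your proposal is correct and follows essentially the same route as the paper: both rewrite the purity sum via $\tr(\tr_{abx'}\ket{U}\bra{U})^2 = 1-\tfrac{1}{2}\tau_{abx'|cy'}(\ket{U})$, bound each generalized concurrence by its maximal value from (\ref{eq:tau_C_max}), and average over the three bipartitions; the paper merely packages the same substitution as an add-and-subtract step leading to its eq.~(\ref{eq:epsilon_interpretation}). Your closing remark about attainability is also on point --- the paper defers that question to Corollary~\ref{th:AME} via AME$(6,d)$ states.
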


\begin{proof}
It is immediate to see from the definition of the entangling power (\ref{eq:epsilon_definition}) that if the quantity (\ref{eq:epsilon_max}) is indeed the upper bound for the the~entangling power (\ref{eq:tau_C}) of the gate $U$ on the bipartition $ab|c$ (\ref{eq:epsilon_abc_geometric}), the theorem is true. All that remains is to prove this assumption.

Let us add and substract the following quantity
\begin{equation}
\begin{split}
2\Bigg(\prod_{i=1}^3\frac{d_i}{d_i+1}\Bigg)\sum_{x'|y'}\left[1+
	\frac{\min(d_{abx'},d_{cy'})-1}{\min(d_{abx'},d_{cy'})}\right]
\end{split}
\end{equation}
from the right hand side of eq. (\ref{eq:epsilon_abc_geometric}). After regrouping the terms, we arrive at
\begin{equation} \label{eq:epsilon_interpretation}
\begin{split}
\epsilon_{ab|c}(U)=\tilde{\epsilon}_{ab|c}
	-\Bigg(\prod_{i=1}^3\frac{d_i}{d_i+1}\Bigg)\sum_{x'|y'}
	\left[2\frac{\min(d_{abx'},d_{cy'})-1}{\min(d_{abx'},d_{cy'})}
	-\tau_{abx'|cy'}\big(\ket{U}\big)\right].
\end{split}
\end{equation}
Recall that in accordance with eq. (\ref{eq:tau_C_max}), the quantity $2\frac{\min(d_{abx'},d_{cy'})-1}{\min(d_{abx'},d_{cy'})}$ is the upper bound for $\tau_{abx'|cy'}$, reachable only by states that are maximally entangled with respect to the bipartition $abx'|cy'$. This means that each and every term in the above sum, and thus the sum itself, is non-negative. Since the sum enters the equation with a minus sign, $\epsilon_{ab|c}(U)$ is at most equal to the quantity $\tilde{\epsilon}_{ab|c}$. This completes the proof.
\end{proof}

Besides bounding the maximum value of the entangling power of unitary gates $U\in U(d_1d_2d_3)$ from above, the theorem provides us with an interpretation of the formula for the entangling power. Looking at eq. (\ref{eq:epsilon_interpretation}) we can see that the value $\epsilon_{ab|c}(U)$ is the highest when the values $\tau_{abx'|cy'}(\ket{U})$ are the highest. The~entangling power of the tripartite gate $U$ is proportional to the total amount of entanglement in the sixpartite state $\ket{U}$. 

Note that Theorem \ref{th:epsilon_max} provides only an upper bound for the maximum value of the entangling power of unitary gates $U\in U(d_1d_2d_3)$. This bound \emph{may not be tight} in general. It follows immediately from our discussion that the bound is tight if \emph{and only if} the maximizing $U$ fulfills
\begin{equation} 
\begin{split}
\tau_{abx'|cy'}\big(\ket{U}\big)=2\frac{\min(d_{abx'},d_{cy'})-1}{\min(d_{abx'},d_{cy'})}
\end{split}
\end{equation}
for all bipartitions $abx'|cy'$ of $\mathcal{H}\otimes\mathcal{H}'$ entering the formula for the entangling power. In fact, further inspection reveals that for the bound to be tight the above relation must be true also for all the other bipartitions. This is either because of the symmetric property of generalized concurrence: $\tau_{abx'|cy'}=\tau_{cy'|abx'}$, or the origin of the state $\ket{U}$ as a unitary matrix. In other words, $U$ must be such that the state $\ket{U}$ is a maximally entangled state with respect to each of the bipartitions.

Such states are known in the literature as \emph{absolutely maximally entangled states} (AME) \cite{AME_1,AME_2,AME_state}. They can exist only in multipartite quantum systems where each subsystem has the same dimension \cite{AME_2}. The set of all such states in an $n$-partite Hilbert space of subsystem dimension $d$ is denoted by AME($n$,$d$). Curiously, there exist pairs $(n,d)$, for which AME states do not exist. For instance, there are no such states for four- and seven-qubit systems \cite{AME_four_qubits,AME_seven_qubits}. However, it has been shown \cite{ame6dexist_1,ame6dexist_2}, that the set AME($6$,$d$), which corresponds to tripartite gates studied here, is non-empty for all dimensions $d$.

Because of the property of the AME states called \emph{multiunitarity} \cite{AME_state}, utilizing the~recipe~(\ref{eq:U_state}), one can use the AME states to construct a unitary matrix $U$ maximizing the entangling power $\epsilon_{1}$ for tripartite gates acting on $\mathcal{H}^{\otimes 3}_d$. A link between large entangling power of unitary operators and strongly entangled multipartite states in an extended space
was already discussed by Scott in \cite{Scott}. We are now in position to establish a more precise relation and
propose here the following result.

\begin{corollary} \label{th:AME}
The upper bound $\tilde{\epsilon}_{1}$ for the maximum entangling power of unitary gates $U(d_1d_2d_3)$ given in Theorem \ref{th:epsilon_max} is tight if and only if $d_1=d_2=d_3\equiv d$. 

Furthermore, given a state $\ket{\psi(6,d)}\in\textrm{AME}(6,d)$ the matrix elements of the maximizing $U$ can be recovered using the recipe (\ref{eq:U_state}), explicitly
\begin{equation}
\begin{split}
U^{j_1j_2j_3}_{j_{1'}j_{2'}j_{3'}}
	=\sqrt{d^3}\braket{j_1j_2j_3j_{1'}j_{2'}j_{3'}|\psi(6,d)}.
\end{split}
\end{equation}
\end{corollary}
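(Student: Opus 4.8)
The plan is to prove the two directions of the iff separately, building directly on the interpretation formula (\ref{eq:epsilon_interpretation}) and the discussion that follows Theorem~\ref{th:epsilon_max}. First I would observe that the entire question of tightness reduces to asking whether there exists a unitary $U\in U(d_1d_2d_3)$ whose associated state $\ket{U}$ saturates $\tau_{abx'|cy'}(\ket{U})$ at its maximal value $2\frac{\min(d_{abx'},d_{cy'})-1}{\min(d_{abx'},d_{cy'})}$ simultaneously for every bipartition entering (\ref{eq:epsilon_interpretation}), and hence (by the symmetry of concurrence and unitarity of $U$, as already argued) for \emph{all} bipartitions of $\mathcal{H}\otimes\mathcal{H}'$. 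That is precisely the defining condition for $\ket{U}$ to be an element of $\mathrm{AME}(6,d)$ in the equal-dimension case.

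For the ``only if'' direction, I would invoke the cited result \cite{AME_2} that absolutely maximally entangled states can exist only when all subsystems share a common dimension. The six subsystems of $\mathcal{H}\otimes\mathcal{H}'$ have dimensions $d_1,d_2,d_3,d_1,d_2,d_3$ (since $\dim\mathcal{H}_{i'}=d_i$), so saturation across all bipartitions forces a balanced splitting that is only achievable when $d_1=d_2=d_3$; if the $d_i$ differ, at least one term in the non-negative sum of (\ref{eq:epsilon_interpretation}) must be strictly positive, so the bound is strict. For the ``if'' direction, assuming $d_1=d_2=d_3\equiv d$, I would use the established non-emptiness of $\mathrm{AME}(6,d)$ for every $d$ \cite{ame6dexist_1,ame6dexist_2}: take any $\ket{\psi(6,d)}\in\mathrm{AME}(6,d)$ and read off a candidate array $U^{j_1j_2j_3}_{j_{1'}j_{2'}j_{3'}}=\sqrt{d^3}\braket{j_1j_2j_3j_{1'}j_{2'}j_{3'}|\psi(6,d)}$ by inverting the correspondence (\ref{eq:U_state}).

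The crux is then to verify that this array genuinely defines a \emph{unitary} operator, not merely a normalized tensor. Here I would appeal to the \emph{multiunitarity} property of AME states \cite{AME_state}: being maximally mixed under every $3$-versus-$3$ cut means that the reshaping of $\ket{\psi(6,d)}$ into the operator with rows $(j_1j_2j_3)$ and columns $(j_{1'}j_{2'}j_{3'})$ is (up to the $\sqrt{d^3}$ normalization) unitary, precisely because maximal entanglement across the cut separating the $\mathcal{H}$ indices from the $\mathcal{H}'$ indices is equivalent to $UU^\dagger\propto\mathds{1}$. Once $U$ is unitary, the state $\ket{U}$ constructed from it via (\ref{eq:U_state}) coincides with $\ket{\psi(6,d)}$ and is maximally entangled across every bipartition, so every bracketed term in (\ref{eq:epsilon_interpretation}) vanishes and the bound is attained.

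I expect the main obstacle to be making the unitarity argument airtight, since it requires carefully matching the index bookkeeping of the Choi--Jamio\l kowski reshaping (\ref{eq:U_state}) against the specific bipartitions under which multiunitarity guarantees maximal mixedness. In particular one must confirm that saturating the single cut $\{j_1j_2j_3\}|\{j_{1'}j_{2'}j_{3'}\}$ is exactly what forces $U$ to be unitary, and that the full AME condition across the remaining cuts then translates back, through the same reshaping, into saturation of all the concurrence terms appearing in the entangling-power formula. The remaining steps---the dimension-matching obstruction for the ``only if'' part and the explicit inversion formula---are essentially bookkeeping once the correspondence between AME$(6,d)$ membership and the vanishing of the sum in (\ref{eq:epsilon_interpretation}) is established.
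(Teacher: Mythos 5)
Your proposal is correct and follows essentially the same route as the paper, which proves this corollary implicitly through the discussion following Theorem~\ref{th:epsilon_max}: tightness is reduced via eq.~(\ref{eq:epsilon_interpretation}) to $\ket{U}$ being maximally entangled across all bipartitions of $\mathcal{H}\otimes\mathcal{H}'$, i.e.\ an AME state, which forces $d_1=d_2=d_3$ by the cited result of \cite{AME_2}; the converse uses the non-emptiness of AME($6$,$d$) \cite{ame6dexist_1,ame6dexist_2} together with multiunitarity \cite{AME_state} to guarantee that the reshaped array from (\ref{eq:U_state}) is genuinely unitary. Your explicit verification that maximal mixedness across the $\{1,2,3\}|\{1',2',3'\}$ cut is equivalent to $UU^\dagger=\mathds{1}$ is exactly the content the paper compresses into the word ``multiunitarity.''
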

It is worth adding that one can generate states $\ket{\psi(6,d)}\in\textrm{AME}(6,d)$, and~therefore gates $U$ that maximize entangling power, using known algorithms \cite{ame6dexist_2,ame_table}.

As our last general result, we compute the mean entangling power over the unitary group $U(d_1d_2d_3)$ and the (real) orthogonal group $O(d_1d_2d_3)$.

\begin{theorem} \label{th:epsilon_mean}
Mean entangling power of tripartite unitary gates averaged over the unitary group $U(d_1d_2d_3)$ and the orthogonal group $O(d_1d_2d_3)$ with respect to the Haar measure read
\begin{equation} \label{eq:epsilon_mean}
\begin{split}
\braket{\epsilon_{1}}_{U(d_1d_2d_3)}
	= \frac{2A}{3(d_1d_2d_3+1)},
\end{split}
\end{equation}
\begin{equation} \label{eq:epsilon_mean_orthog}
\begin{split}
\braket{\epsilon_{1}}_{O(d_1d_2d_3)}
	=	\frac{2A\left(\left[\prod_{i=1}^3 d_i(d_i+1)\right]-8\right)}
	{3(d_1d_2d_3-1)(d_1d_2d_3+2)\left[\prod_{i=1}^3(d_i+1)\right]},
\end{split}
\end{equation}
where $A\coloneqq 3-d_1-d_2-d_3-d_1d_2-d_1d_3-d_2d_3+3d_1d_2d_3$.
\end{theorem}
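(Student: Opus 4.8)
The plan is to compute the two averages by integrating the basis-explicit formula from Lemma~\ref{th:epsilon_formula_basis} over the full group $G = U(d_1 d_2 d_3)$ or $G = O(d_1 d_2 d_3)$. Starting from eq.~(\ref{eq:epsilon_abc_basis}), I would write
\begin{equation*}
\braket{\epsilon_{ab|c}}_{G} = 2\Bigg[1-\Bigg(\prod_{i=1}^3\frac{1}{d_i(d_i+1)}\Bigg)
u_{\vec{r}}\,u_{\vec{s}}\,u_{\vec{t}}\,\braket{f_{\vec{r},\vec{s},\vec{t}}^{ab|c}(U)}_{G}\Bigg],
\end{equation*}
and then average $f^{ab|c}_{\vec{r},\vec{s},\vec{t}}(U)$ over $G$. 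Crucially, $f$ in eq.~(\ref{eq:f_indices}) is a degree-$(2,2)$ polynomial in the entries of $U$ and $U^\dag$ --- it contains exactly two factors of $U$ and two of $U^\dag$. This is precisely a fourth moment, so the unitary case requires the Weingarten formula for $\int_{U(d)} U U U^\dag U^\dag$ (the $p=2$ Weingarten calculus, a sum over pairings in $S_2 \times S_2$ with the known $d$-dependent Weingarten coefficients), and the orthogonal case requires the analogous fourth-moment integral over $O(d)$ (pairings of four indices, with orthogonal Weingarten weights). I would assemble these standard moment formulas, contract them against the six Kronecker deltas in $f$ and against $u_{\vec r} u_{\vec s} u_{\vec t}$, and simplify.

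The execution proceeds in the following order. First I would fix the bipartition, say $ab|c = 12|3$, and carefully carry out the index contractions: the deltas in eq.~(\ref{eq:f_indices}) identify the $a,b$ indices across the two $U$-$U^\dag$ pairs and the $c$ index across the other pairing, so after applying the moment formula each surviving Kronecker chain collapses to a power of the relevant dimension $d_1$, $d_2$, or $d_3$. Because everything factorizes over the three tensor legs, the result should organize into products of $d_i$, $d_i^2$, and $d_i\pm 1$ factors. Second, I would record $\braket{\epsilon_{12|3}}_G$ and obtain $\braket{\epsilon_{13|2}}_G$ and $\braket{\epsilon_{23|1}}_G$ by the obvious permutation of the labels $1,2,3$. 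Third, I would sum the three and divide by three per eq.~(\ref{eq:epsilon_definition}); the symmetric polynomial $A = 3 - \sum_i d_i - \sum_{i<j} d_i d_j + 3 d_1 d_2 d_3$ should emerge as the common numerator factor once the three permuted contributions are added, which is a strong consistency check on the contraction bookkeeping.

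The main obstacle will be the orthogonal case. Over $O(d)$ the entries are real, so $U^\dag$ in eq.~(\ref{eq:f_indices}) is just the transpose; the fourth moment over $O(d)$ involves a richer set of index pairings than the unitary case (all pairings of the four row-indices and four column-indices, not only those respecting the $U$/$U^\dag$ split), and the orthogonal Weingarten coefficients have a more intricate $d$-dependence. Tracking these extra pairings and ensuring the delta contractions are done consistently --- with the correct $1/[(d_1d_2d_3-1)(d_1d_2d_3+2)]$ denominator and the $\prod_i d_i(d_i+1) - 8$ correction in the numerator appearing correctly --- is where errors are most likely to creep in. I would guard against this by checking the unitary result against the general formula of Theorem~\ref{th:epsilon_formula_geometric} in a small case (e.g. three qubits, $d_i=2$), and by verifying that in the relevant limit the orthogonal average is consistent with the known qualitative fact that real gates entangle less on average than complex ones.
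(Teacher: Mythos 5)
Your proposal takes essentially the same route as the paper: the authors likewise average the basis-explicit expression (\ref{eq:epsilon_abc_basis}) term by term, evaluating $\braket{f^{ab|c}_{\vec{r},\vec{s},\vec{t}}(U)}$ with the degree-$(2,2)$ moment formula (\ref{eq:symbolic_integration}) for the unitary group and with its orthogonal Weingarten analogue (\ref{eq:symbolic_integration_orthog}) derived in Appendix \ref{app:integration}. No gap; the approach is correct and coincides with the paper's.
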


\begin{proof}
The proof relies on the expression (\ref{eq:epsilon_abc_basis}). Since the functions $f^{ab|c}_{\vec{r},\vec{s},\vec{t}}(U)$ are given (\ref{eq:f_indices}) in terms of the second moments of the unitary matrix, one can integrate the basis-dependent expression using the previously utilized formula (\ref{eq:symbolic_integration}), which results in the conjectured expression (\ref{eq:epsilon_mean}) for the unitary group.

In the case of the orthogonal group, it is possible to find an analog of the formula (\ref{eq:symbolic_integration}) for orthogonal matrices -- we refer the reader to Appendix \ref{app:integration} for details. The proof is then fully analogous to the unitary case.
\end{proof}

We conclude the section by applying Theorems \ref{th:epsilon_max} and \ref{th:epsilon_mean} to the special case of evenly divided tripartite systems $d_1=d_2=d_3=d$, called three-qu$d$it systems, in which the expressions given therein simplify significantly.

\begin{corollary} \label{th:corollary}
The mean entangling power of three-qu$d$it gates $U\in U(d^3)$ reads
\begin{equation} \label{eq:mean_qudits}
\begin{split}
\braket{\epsilon_{1}}_{U(d^3)}=2\frac{(d-1)^2}{d^2-d+1},
\end{split}
\end{equation}
in the case of the unitary group $U(d^3)$, and
\begin{equation} \label{eq:mean_qudits_orthog}
\begin{split}
\braket{\epsilon_{1}}_{O(d^3)}
	= 2\frac{d^3(d+1)^3(d-1)-8(d-1)}{(d^3+2)(d^2+d+1)(d+1)^2},
\end{split}
\end{equation}
in the case of the orthogonal group $O(d^3)$. Furthermore, the maximum value of the entangling power is equal to
\begin{equation} \label{eq:max_qudits}
\begin{split}
\tilde{\epsilon}_{1}=2\frac{d^2+d-2}{(1+d)^2}.
\end{split}
\end{equation}
\end{corollary}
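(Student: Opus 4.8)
The plan is to obtain Corollary \ref{th:corollary} purely as a specialization of the general results in Theorems \ref{th:epsilon_max} and \ref{th:epsilon_mean}, setting $d_1=d_2=d_3=d$ throughout. This is a routine substitution exercise, so the task is algebraic simplification rather than any new conceptual argument. I would treat the three displayed identities (\ref{eq:mean_qudits}), (\ref{eq:mean_qudits_orthog}) and (\ref{eq:max_qudits}) one at a time.

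For the unitary mean (\ref{eq:mean_qudits}), I would start from (\ref{eq:epsilon_mean}). With all $d_i=d$ the auxiliary quantity becomes $A = 3 - 3d - 3d^2 + 3d^3 = 3(1 - d - d^2 + d^3) = 3(d-1)(d^2-1) = 3(d-1)^2(d+1)$, using the factorization $d^3-d^2-d+1=(d-1)(d^2-1)$. Substituting into $\tfrac{2A}{3(d^3+1)}$ and cancelling the factor $3$ gives $\tfrac{2(d-1)^2(d+1)}{d^3+1}$, and since $d^3+1=(d+1)(d^2-d+1)$ the factor $(d+1)$ cancels, leaving exactly $2(d-1)^2/(d^2-d+1)$. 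First I would verify the factorization of $A$, then the cancellation of the two $(d+1)$ factors from numerator and denominator.

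For the orthogonal mean (\ref{eq:mean_qudits_orthog}), I would specialize (\ref{eq:epsilon_mean_orthog}). Here $\prod_{i=1}^3 d_i(d_i+1)=d^3(d+1)^3$, so the numerator of the bracketed factor is $d^3(d+1)^3-8$, and with the same $A=3(d-1)^2(d+1)$ the full numerator is $2\cdot 3(d-1)^2(d+1)\bigl(d^3(d+1)^3-8\bigr)$. The denominator is $3(d^3-1)(d^3+2)(d+1)^3$. Cancelling the factor $3$ and one power of $(d+1)$, and writing $d^3-1=(d-1)(d^2+d+1)$ so that one $(d-1)$ cancels against a factor in $(d-1)^2$, should reduce the expression to the stated form. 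The only mild subtlety I anticipate is matching the paper's numerator $d^3(d+1)^3(d-1)-8(d-1)=(d-1)\bigl(d^3(d+1)^3-8\bigr)$, so I would factor $(d-1)$ out of the paper's numerator and confirm the residual $(d-1)$ and $(d+1)$ powers cancel consistently on both sides; this bookkeeping of the $(d-1)$ and $(d+1)$ multiplicities is the step most prone to slips.

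For the maximum (\ref{eq:max_qudits}), I would specialize (\ref{eq:epsilon_max_conjecture})--(\ref{eq:epsilon_max}). By symmetry all three $\tilde{\epsilon}_{ab|c}$ coincide, so $\tilde{\epsilon}_1=\tilde{\epsilon}_{12|3}$ and the averaging factor $\tfrac13$ is absorbed. The prefactor $\prod_{i}\tfrac{d_i}{d_i+1}$ becomes $d^3/(d+1)^3$. The remaining work is to evaluate $\sum_{x'|y'}\tfrac{\min(d_{abx'},d_{cy'})-1}{\min(d_{abx'},d_{cy'})}$ over the eight bipartitions in (\ref{eq:xy_range}) when all local dimensions equal $d$: for each of the eight terms I would compute the two block dimensions $d_{abx'}=d^{2+|x'|}$ and $d_{cy'}=d^{1+|y'|}$, take the smaller, and sum the resulting $1-1/\min$ contributions. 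Then I would plug $8-\sum$ into (\ref{eq:epsilon_max}) and simplify. I expect this enumeration of the eight AME-type bipartitions and the explicit $\min$ evaluation to be the main obstacle, since it is the only step requiring genuine case analysis rather than direct factorization; everything else is cancellation. Finally I would confirm the collected expression simplifies to $2(d^2+d-2)/(1+d)^2$, noting $d^2+d-2=(d+2)(d-1)$ in case a partial cancellation against $(1+d)^2$ clarifies the arithmetic.
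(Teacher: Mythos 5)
Your proposal is correct and follows the same route the paper intends: the corollary is obtained by direct substitution $d_1=d_2=d_3=d$ into Theorems \ref{th:epsilon_max} and \ref{th:epsilon_mean}, and all of your intermediate simplifications (the factorization $A=3(d-1)^2(d+1)$, the cancellations against $d^3\pm 1$, and the enumeration of the eight bipartitions giving $8-\sum = (d+1)(d+3)/d^3$) check out and reproduce the three stated formulas.
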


Figure \ref{fig:epsilon_comparison} is a comparison of the maximum value (\ref{eq:max_qudits}) of the entangling power of unitary three-qu$d$it gates, denoted by orange squares, with the general upper bound, equal to one-tangle of the maximally entangled state
\begin{equation} \label{eq:max_theoretical}
\begin{split}
\max\tau_1=2\frac{d-1}{d},
\end{split}
\end{equation}
denoted by green diamonds. For reference, the mean entangling power (\ref{eq:mean_qudits}) of three-qu$d$it gates has also been plotted as blue circles. As seen, while for large dimension $d$ the three quantities practically converge, for smaller dimensions the maximum is noticeably smaller than the theoretical upper bound. 

It is also worth emphasizing, that the limiting value of the three quantities is $2$. Since in our normalization this is the limiting value of one-tangle of the maximally entangled state [see eq. (\ref{eq:max_theoretical})], this means that in large dimensions, on average, the action of a typical tripartite unitary gate on a separable state results in a state close to the maximally entangled one. This may be seen as a consequence of the fact that unitary gates correspond to quantum states [eq. (\ref{eq:U_state})] and the fact that the in large dimensions generic quantum states tend to be highly entangled \cite{most_states_are_highly_entangled}.

\begin{example} \label{th:example_tripartite}
As an example, we consider three of the most popular three-qubit quantum gates \cite{three_qubit_gates}: the Fredkin gate $U_{F}$ (also known as controlled SWAP), the Toffoli gate $U_{T}$ (also known as controlled-controlled NOT) and the Deutsch gate $U_{D}(\theta)$ (a controlled-controlled complex rotation by angle $\theta$) -- see \cite{three_qubit_gates} for explicit matrix forms. We find that
\begin{equation}
\begin{split}
\epsilon_1\big(U_{D}(\theta)\big)=\frac{1}{27}[7-3\cos(2\theta)]\leqslant
    \epsilon_1(U_{F})=\epsilon_1(U_{T})=\frac{10}{27}\approx 0.37,
\end{split}
\end{equation}
which is much smaller than the mean value $\braket{\epsilon_{1}}_{U(2^3)}=2/3\approx 0.67$, as calculated using Corollary \ref{th:corollary}. We can thus conclude that all three gates are on average much less entangling than a random three-qubit unitary gate.
\end{example} 

\begin{figure}[!tb]
	\centering
	\includegraphics[width=1\textwidth]{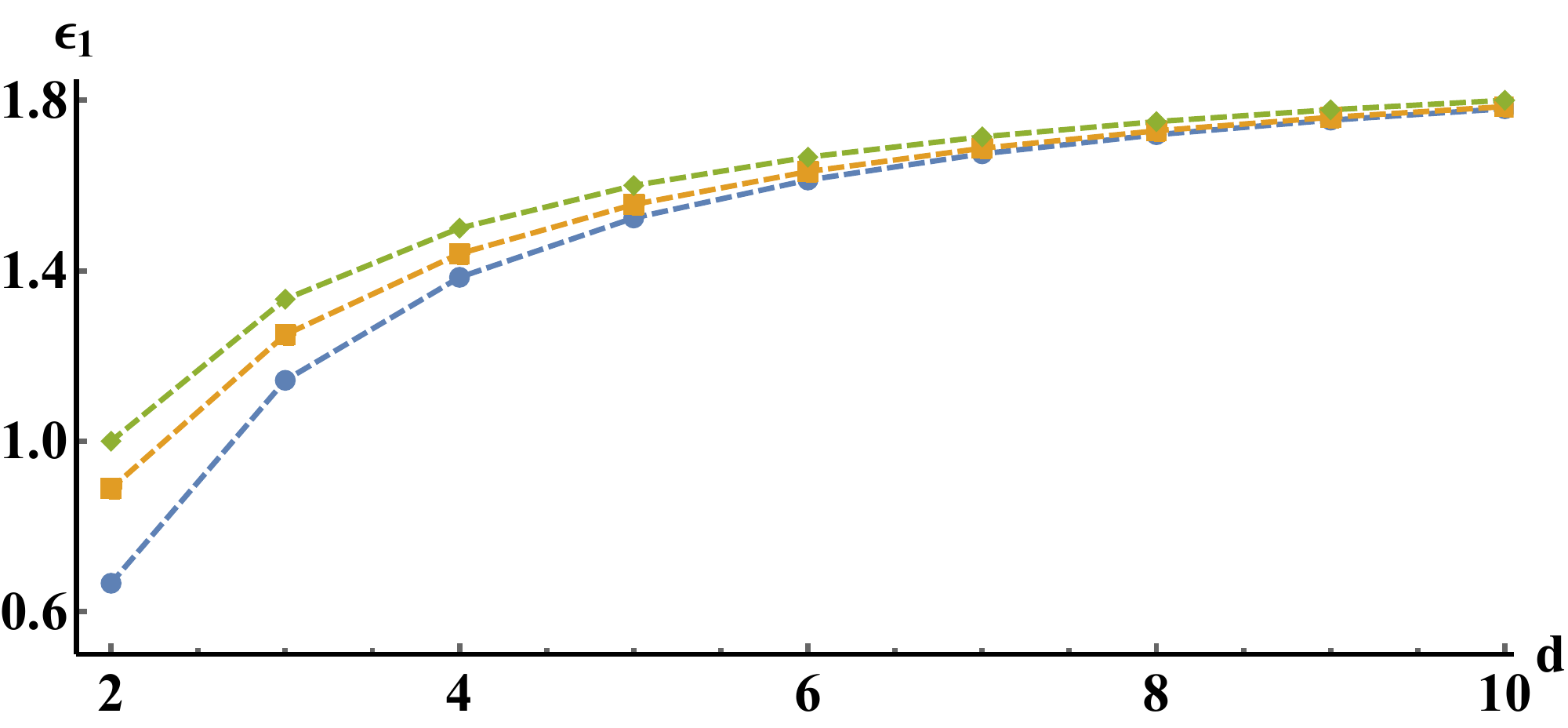}
	\captionsetup{width=1\textwidth}
	\caption{Comparison of the upper bound (\ref{eq:max_qudits}) for the entangling power of three qu$d$it unitary matrices, denoted by squares (orange), with the theoretical maximum (\ref{eq:max_theoretical}), denoted by diamonds (green), and the mean entangling power (\ref{eq:mean_qudits}), denoted by circles (blue). Dashed lines connecting the points have been plotted to guide the eye.}
	\label{fig:epsilon_comparison}
\end{figure}

\section{Multipartite unitary gates} \label{sec:multipartite}
In this section, we generalize the results from Sections \ref{sec:formula}-\ref{sec:tripartite_gates} to an arbitrary number of partitions. With the exception of Lemma \ref{th:epsilon_formula_basis}, which is skipped due to its auxiliary nature, each of the theorems provided therein is reiterated here in a version for $n$ partitions. The proofs are omitted, as they follow the same lines of reasoning as their tripartite counterparts.

In full analogy to the tripartite case, to every unitary matrix $U$ acting in the Hilbert space $\mathcal{H}=\bigotimes_{i=1}^n\mathcal{H}_i$ we relate the state $\ket{U}$ in the extended Hilbert space $\mathcal{H}\otimes\mathcal{H}'$, where $\mathcal{H}'=\bigotimes_{i=1}^n\mathcal{H}_{i'}$ and $\dim\mathcal{H}_i=\dim\mathcal{H}_{i'}=d_i$ using the following equation:
\begin{equation} \label{eq:U_state_n}
\begin{split}
\ket{U}&\coloneqq\frac{1}{\sqrt{d_1\ldots d_n}}U^{j_1\ldots j_n}_{j_{1'}\ldots j_{n'}}
	\ket{j_1\ldots j_nj_{1'}\ldots j_{n'}}.
\end{split}
\end{equation}
The $2n$-index notation has been used here to denote matrix elements,
\begin{equation}
\begin{split}
U^{j_1 \ldots j_n}_{j_{1'} \ldots j_{n'}}&\coloneqq\bra{j_1 \ldots j_n}U\ket{j_{1'} \ldots j_{n'}}.
\end{split}
\end{equation}

Furthermore, we establish the following summation convention. Summation over $p|q$ is understood as a summation over all \emph{non-trivial, unordered} bipartitions of $\mathcal{H}$. Summation over $x'|y'$, on the other hand, is understood as a summation over all \emph{ordered} bipartitions of $\mathcal{H}'$, including trivial cases. For example, in the simplest case of two partitions, $n=2$, we have
\begin{equation}
\begin{split}
p|q\in\{1|2\},\qquad x'|y'\in\{1'2'|\cdot,\;1'|2',\;2'|1',\;\cdot|1'2'\},
\end{split}
\end{equation}
where a single dot represents an empty set.
As the measure of entanglement, we choose the natural generalization of~one-tangle (\ref{eq:one-tangle}) to $n$ parties,
\begin{equation} \label{eq:one-tangle_n}
\begin{split}
\tau_{1}(U)=\frac{1}{2^{n-1}-1}\sum_{p|q}\tau_{p|q}(U),
\end{split}
\end{equation}
where $2^{n-1}-1$ is the number of bipartitions and $\tau_{p|q}$ is defined in~eq.~(\ref{eq:tau_C}). We are now ready to present the results concerning unitary gates acting on $n$-partite systems.

\begin{theorem}
Definition (\ref{eq:epsilon_states}) of the entangling power for an $n$-partite system with the the entanglement measure given by the $n$-particle generalization of one-tangle (\ref{eq:one-tangle_n}) is equivalent to
\begin{equation}
\begin{split}
\epsilon_{1}(U)=\frac{1}{2^{n-1}-1}\sum_{p|q}\epsilon_{p|q}(U),
\end{split}
\end{equation}
where
\begin{equation} \label{eq:epsilon_C_pq}
\begin{split}
\epsilon_{p|q}(U)=
	2\left[1-\Bigg(\prod_{i=1}^n\frac{d_i}{d_i+1}\Bigg)
	\sum_{x'|y'}
	\tr\big(\tr_{px'}\ket{U}\bra{U}\big)^2\right].
\end{split}
\end{equation}
denotes the entangling power of the matrix $U$ with respect to the bipartition $p|q$ of $\mathcal{H}$.
\end{theorem}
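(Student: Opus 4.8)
The plan is to mirror the two-step derivation of Lemma~\ref{th:epsilon_formula_basis} and Theorem~\ref{th:epsilon_formula_geometric}, now carried out for an arbitrary number $n$ of subsystems. The first equality, the decomposition of $\epsilon_1(U)$ into a weighted sum of bipartite contributions $\epsilon_{p|q}(U)$, follows immediately: since the $n$-partite one-tangle (\ref{eq:one-tangle_n}) is by definition the normalized sum over bipartitions $p|q$ of generalized concurrences $\tau_{p|q}$, and the entangling power (\ref{eq:epsilon_states}) is a linear average, linearity of the average distributes over that sum. As in the tripartite proof, I would first trade the average over separable pure states for an average over local unitaries $U_1\otimes\cdots\otimes U_n$ applied to a fixed product state, which with no loss of generality I take to be the first basis vector $\ket{\psi_0}=\ket{0\ldots 0}$; this reduces the task to showing that each
\[
\epsilon_{p|q}(U)=\braket{\tau_{p|q}\big[U\,(U_1\otimes\cdots\otimes U_n)\ket{\psi_0}\big]}_{U_i\in SU(d_i)}
\]
has the claimed form (\ref{eq:epsilon_C_pq}).

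For the single-bipartition term I would repeat the two computational steps of Lemma~\ref{th:epsilon_formula_basis}. First, expanding $U\,(U_1\otimes\cdots\otimes U_n)\ket{\psi_0}$ in the computational basis and evaluating $\tau_{p|q}=2\big(1-\tr\big(\tr_p\ket{\cdot}\bra{\cdot}\big)^2\big)$ produces, exactly as in (\ref{eq:tau_ABC_appendix}), a factorized expression of the form $\tau_{p|q}=2\big[1-\prod_{i=1}^n (U_i)^{\vec v_i}\, f^{p|q}(U)\big]$, where each local factor $(U_i)^{\vec v}=(U_i)^{v_1}_0(U_i^\dag)_{v_2}^0(U_i)^{v_3}_0(U_i^\dag)_{v_4}^0$ is quartic in the $i$-th local unitary and $f^{p|q}(U)$ is the $n$-partite analogue of (\ref{eq:f_indices}): a contraction of two copies each of $U$ and $U^\dag$ with Kronecker deltas that glue indices according to the split $p|q$. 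Second, since the $f$-tensor carries no dependence on the $U_i$, I would pull it out and perform the $n$ mutually independent Haar integrals, each of which is a second moment governed by (\ref{eq:symbolic_integration}); this collapses every $\int dU_i\,(U_i)^{\vec v}$ to $\frac{1}{d_i(d_i+1)}\big(\delta^{v_1}_{v_2}\delta^{v_3}_{v_4}+\delta^{v_1}_{v_4}\delta^{v_3}_{v_2}\big)$, i.e. to the tensor $u_{\vec v}$, yielding the basis-explicit $n$-partite formula with prefactor $\prod_i \frac{1}{d_i(d_i+1)}$.

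It remains to recast this contraction in the geometric language of (\ref{eq:epsilon_C_pq}). Following the strategy behind Theorem~\ref{th:epsilon_formula_geometric}, I would compute the right-hand side $\sum_{x'|y'}\tr\big(\tr_{px'}\ket{U}\bra{U}\big)^2$ directly in an arbitrary basis, using the Choi state (\ref{eq:U_state_n}), and check that it reproduces $u_{\vec v_1}\cdots u_{\vec v_n}\, f^{p|q}(U)$ term by term. The guiding observation is combinatorial: each of the $n$ local integrals contributes its two delta-structures (the two pairings inside $u_{\vec v}$), so the product over all subsystems expands into $2^n$ contraction patterns, and these patterns are in bijection with the $2^n$ ordered bipartitions $x'|y'$ of the auxiliary space $\mathcal{H}'$, including the two trivial ones, the two pairings deciding for each auxiliary subsystem $i'$ on which side of the split it lands. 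For a given pattern, the subsystems assigned to $x'$ are precisely those traced out together with $p$, so each pattern reconstructs exactly one purity $\tr\big(\tr_{px'}\ket{U}\bra{U}\big)^2$. The normalizations also match: the factor $1/(d_1\cdots d_n)^2$ generated by squaring $\ket{U}\bra{U}$ converts the prefactor $\prod_i \frac{1}{d_i(d_i+1)}$ of the basis-explicit form into $\prod_i \frac{d_i}{d_i+1}$ of the geometric form.

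The main obstacle is the bookkeeping in this last step rather than any conceptual difficulty: one must verify that the $2^n$-fold expansion of the $u$-tensors matches, pattern by pattern and with correct multiplicities, the $2^n$ ordered bipartitions of $\mathcal{H}'$, and that the Kronecker deltas coming from $f^{p|q}(U)$ combine with those from the $u$-tensors to implement exactly the partial trace over $px'$ followed by squaring. Because the argument is uniform in $n$ and relies only on linearity of the average, the second-moment identity (\ref{eq:symbolic_integration}), and this bijection, no ingredient beyond the tripartite proof is required; setting $n=3$ recovers Theorem~\ref{th:epsilon_formula_geometric} and serves as a consistency check.
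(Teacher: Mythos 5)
Your proposal is correct and follows exactly the route the paper intends: the paper omits the $n$-partite proofs precisely because they ``follow the same lines of reasoning as their tripartite counterparts,'' which is the two-step argument (linearity over bipartitions, local-unitary averaging via the second-moment formula, then recasting in terms of the Choi state) that you reproduce. Your combinatorial observation identifying the $2^n$ contraction patterns from the product of $u$-tensors with the $2^n$ ordered bipartitions $x'|y'$ of $\mathcal{H}'$, together with the normalization check, correctly fills in the bookkeeping that the paper leaves implicit.
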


\begin{theorem} \label{th:epsilon_max_n}
The entangling power of $n$-partite unitary gates $U\in U(d_1\ldots d_n)$ is bounded from above by
\begin{equation}
\begin{split}
\tilde{\epsilon}_{1}\coloneqq\frac{1}{2^{n-1}-1}\sum_{p|q}\tilde{\epsilon}_{p|q}(U),
\end{split}
\end{equation}
where $\tilde{\epsilon}_{p|q}$ is the upper bound for the entangling power of the matrix $U$ on the bipartition $p|q$:
\begin{equation}
\begin{split}
\tilde{\epsilon}_{p|q}\coloneqq 2
	-2\Bigg(\prod_{i=1}^n\frac{d_i}{d_i+1}\Bigg)
	\left[2^n-\sum_{x'|y'}
	\frac{\min(d_{px'},d_{qy'})-1}{\min(d_{px'},d_{qy'})}\right].
\end{split}
\end{equation}
In the above expression, $d_{px'}$, $d_{qy'}$ denote the dimensions of the respective bipartitions of the extended Hilbert space $\mathcal{H}\otimes\mathcal{H}'$.
\end{theorem}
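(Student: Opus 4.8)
The plan is to transcribe the tripartite argument of Theorem~\ref{th:epsilon_max} line by line, which is legitimate because the $n$-partite formula (\ref{eq:epsilon_C_pq}) for $\epsilon_{p|q}(U)$ has exactly the same structure as its tripartite counterpart (\ref{eq:epsilon_abc_geometric}). First I would reduce the claim to a single bipartition: since $\tau_1$ is defined in (\ref{eq:one-tangle_n}) as the uniform average over the $2^{n-1}-1$ generalized concurrences $\tau_{p|q}$, with positive weights summing to one, it suffices to establish $\epsilon_{p|q}(U)\leqslant\tilde{\epsilon}_{p|q}$ for every non-trivial bipartition $p|q$ of $\mathcal{H}$. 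Taking the same convex combination of both sides then yields $\epsilon_1(U)\leqslant\tilde{\epsilon}_1$, so the reduction loses nothing.

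For the per-bipartition estimate I would rewrite the purity in (\ref{eq:epsilon_C_pq}) as a generalized concurrence of the $2n$-partite state $\ket{U}$. By the definition (\ref{eq:tau_C}), each ordered bipartition $x'|y'$ of $\mathcal{H}'$ induces a bipartition $px'|qy'$ of the extended space $\mathcal{H}\otimes\mathcal{H}'$ for which $\tr\big(\tr_{px'}\ket{U}\bra{U}\big)^2=1-\tfrac{1}{2}\tau_{px'|qy'}\big(\ket{U}\big)$. Substituting this and using that $\mathcal{H}'$ carries exactly $2^n$ ordered bipartitions, so that $\sum_{x'|y'}1=2^n$, the constant part of the sum reproduces the term $2^n$ in the definition of $\tilde{\epsilon}_{p|q}$. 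Concretely, adding and subtracting
\begin{equation}
2\Bigg(\prod_{i=1}^n\frac{d_i}{d_i+1}\Bigg)\sum_{x'|y'}\left[1+\frac{\min(d_{px'},d_{qy'})-1}{\min(d_{px'},d_{qy'})}\right]
\end{equation}
from the right-hand side of (\ref{eq:epsilon_C_pq}) and regrouping exactly as in the $n=3$ case, I would arrive at
\begin{equation}
\epsilon_{p|q}(U)=\tilde{\epsilon}_{p|q}-\Bigg(\prod_{i=1}^n\frac{d_i}{d_i+1}\Bigg)\sum_{x'|y'}\left[2\frac{\min(d_{px'},d_{qy'})-1}{\min(d_{px'},d_{qy'})}-\tau_{px'|qy'}\big(\ket{U}\big)\right].
\end{equation}

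The bound is then immediate from the range inequality (\ref{eq:tau_C_max}): for each bipartition $px'|qy'$ the quantity $2\frac{\min(d_{px'},d_{qy'})-1}{\min(d_{px'},d_{qy'})}$ is the maximal attainable value of $\tau_{px'|qy'}$, so every bracket is non-negative; the prefactor is positive and the correction enters with a minus sign, whence $\epsilon_{p|q}(U)\leqslant\tilde{\epsilon}_{p|q}$. I do not expect any genuine analytic difficulty, as the computation is a direct transcription of the tripartite proof. The one step demanding care — which I regard as the main obstacle — is the combinatorial bookkeeping of the extended space: verifying that the summation over $x'|y'$ in (\ref{eq:epsilon_C_pq}) indeed ranges over precisely the $2^n$ ordered bipartitions of $\mathcal{H}'$ (so that the constant term equals $2^n$), and that the marginal dimensions $d_{px'}$ and $d_{qy'}$ are correctly matched to the reduced states of $\ket{U}$ when the range bound is invoked. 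Once this counting is pinned down, the argument closes with no further input.
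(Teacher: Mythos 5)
Your proposal is correct and follows essentially the same route as the paper, which proves the tripartite case (Theorem \ref{th:epsilon_max}) by the identical add-and-subtract regrouping into concurrence deficits and then states that the $n$-partite version follows by the same reasoning. Your bookkeeping is right: the sum over ordered bipartitions $x'|y'$ of $\mathcal{H}'$ indeed has $2^n$ terms, and the identity $\tr\big(\tr_{px'}\ket{U}\bra{U}\big)^2=1-\tfrac{1}{2}\tau_{px'|qy'}\big(\ket{U}\big)$ together with the range bound (\ref{eq:tau_C_max}) closes the argument exactly as in the paper.
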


\begin{corollary}
The upper bound $\tilde{\epsilon}_{1}$ for the maximum entangling power of unitary gates $U(d_1\ldots d_n)$ given in Theorem \ref{th:epsilon_max_n} is tight if and only if $d_1=\ldots=d_n\equiv d$ and the set AME($2n$,$d$) is non-empty. 

Furthermore, given a state $\ket{\psi(2n,d)}\in\textrm{AME}(2n,d)$ the matrix elements of the unitary gate $U$ maximizing the entangling power can be recovered explicitly using the recipe (\ref{eq:U_state_n}), explicitly
\begin{equation}
\begin{split}
U^{j_1\ldots j_n}_{j_{1'}\ldots j_{n'}}
	=\sqrt{d^n}\braket{j_1\ldots j_nj_{1'}\ldots j_{n'}|\psi(2n,d)}.
\end{split}
\end{equation}
\end{corollary}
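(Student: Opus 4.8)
The plan is to prove this corollary as a direct $n$-partite generalization of Corollary~\ref{th:AME}, mirroring the structure of Theorem~\ref{th:epsilon_max} and its tightness discussion. First I would invoke Theorem~\ref{th:epsilon_max_n}: exactly as in the tripartite derivation of eq.~(\ref{eq:epsilon_interpretation}), the entangling power on each bipartition can be rewritten as
\begin{equation}
\begin{split}
\epsilon_{p|q}(U)=\tilde{\epsilon}_{p|q}
	-\Bigg(\prod_{i=1}^n\frac{d_i}{d_i+1}\Bigg)\sum_{x'|y'}
	\left[2\frac{\min(d_{px'},d_{qy'})-1}{\min(d_{px'},d_{qy'})}
	-\tau_{px'|qy'}\big(\ket{U}\big)\right],
\end{split}
\end{equation}
so that each summand is non-negative by the range bound (\ref{eq:tau_C_max}). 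Hence the bound $\tilde{\epsilon}_{1}$ is attained if and only if every $\tau_{px'|qy'}(\ket{U})$ saturates its maximum, i.e. $\ket{U}$ is maximally entangled across each bipartition $px'|qy'$ appearing in the sum. As in the tripartite case, the symmetry $\tau_{px'|qy'}=\tau_{qy'|px'}$ together with the unitary origin of $\ket{U}$ forces maximal entanglement across \emph{all} bipartitions of $\mathcal{H}\otimes\mathcal{H}'$, which is precisely the defining property of an AME state on $2n$ parties.

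Second, I would establish the dimensional condition. AME states are known to require all subsystems to share a common local dimension \cite{AME_2}; since $\ket{U}$ lives on $2n$ subsystems of dimensions $d_1,\ldots,d_n,d_1,\ldots,d_n$, absolute maximal entanglement is possible only when $d_1=\ldots=d_n\equiv d$. This gives the necessity of the equal-dimension hypothesis. For sufficiency one must additionally assume AME($2n$,$d$) is non-empty — unlike the tripartite case, where AME($6$,$d$) is guaranteed to exist for every $d$ \cite{ame6dexist_1,ame6dexist_2}, for general $n$ there are genuinely empty cases (e.g. AME($4$,$2$) and AME($7$,$2$) \cite{AME_four_qubits,AME_seven_qubits}), so the extra clause in the statement is essential.

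Third, for the converse direction I would use multiunitarity \cite{AME_state}: given $\ket{\psi(2n,d)}\in\textrm{AME}(2n,d)$, the recipe (\ref{eq:U_state_n}) read backwards defines a tensor $U^{j_1\ldots j_n}_{j_{1'}\ldots j_{n'}}=\sqrt{d^n}\braket{j_1\ldots j_nj_{1'}\ldots j_{n'}|\psi(2n,d)}$, and the multiunitarity property guarantees that the map $\mathcal{H}\to\mathcal{H}$ so defined is genuinely unitary. This $U$ then saturates every concurrence term, so the bound is tight.

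I expect the main obstacle to be the argument that saturation on the bipartitions entering the entangling-power sum forces saturation on \emph{all} bipartitions. The tripartite text dispatches this via the symmetry of generalized concurrence and ``the origin of the state $\ket{U}$ as a unitary matrix,'' but for general $n$ one should check carefully which bipartitions of $\mathcal{H}\otimes\mathcal{H}'$ actually appear across all choices of $p|q$ and $x'|y'$, and confirm that the combination of the symmetry $\tau_{S|S^c}=\tau_{S^c|S}$ with the unitarity constraint (which pins down the marginals mixing primed and unprimed copies) indeed covers the full collection of $2^{2n-1}-1$ bipartitions. Verifying this coverage — rather than the now-routine integral or algebraic manipulations — is the delicate step, and it is exactly where the equal-dimension requirement and the AME characterization become both necessary and sufficient.
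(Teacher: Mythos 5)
Your proposal follows essentially the same route as the paper: the paper omits an explicit proof of this corollary, stating only that it "follows the same lines of reasoning as its tripartite counterpart" (Corollary~\ref{th:AME}), and your argument is precisely that tripartite reasoning --- the decomposition of $\epsilon_{p|q}$ into $\tilde{\epsilon}_{p|q}$ minus a sum of non-negative concurrence deficits, the extension via symmetry and unitarity to all bipartitions forcing $\ket{U}$ to be AME, the equal-dimension necessity, and multiunitarity for the converse --- lifted to $n$ parties with the correct extra non-emptiness clause for AME($2n$,$d$). The coverage step you flag as delicate is likewise asserted rather than verified in the paper, so your treatment matches the paper's level of rigor.
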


We stress that contrary to the tripartite case, in general the set AME($2n$,$d$) is not always non-trivial. A table on the existence of AME states along with construction algorithms can be found in \cite{ame_table}.

\begin{theorem} \label{th:epsilon_mean_n}
Mean entangling power of tripartite unitary gates averaged over the unitary group $U(d_1\ldots d_n)$ and the orthogonal group $O(d_1\ldots d_n)$ with respect to the Haar measure read
\begin{equation} \label{eq:mean_unitary_n}
\begin{split}
\braket{\epsilon_{1}}_{U(d_1\ldots d_n)}
	= 2\left[1-\left(\prod_{i=1}^n\frac{1}{d_i+1}\right)
	\frac{BC}
	{(2^{n-1}-1)(D+1)}\right],
\end{split}
\end{equation}
\begin{equation}
\begin{split}
\braket{\epsilon_{1}}_{O(d_1\ldots d_n)}
	=2\left[1-\left(\prod_{i=1}^n\frac{1}{d_i+1}\right)
	\frac{2^n(D+1)-2B+\frac{BD-2^n}{2^{n-1}-1}C}
	{(D-1)(D+2)}\right],
\end{split}
\end{equation}
where $B\coloneqq \sum_{i_1,\ldots,i_n=0}^1 d_1^{i_1}\ldots d_n^{i_n}$, $C\coloneqq\sum_{p|q}(d_p+d_q)$ and $D\coloneqq d_1\ldots d_n$.
\end{theorem}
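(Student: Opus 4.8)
The plan is to reduce the $n$-partite computation to exactly the mechanism that produced Theorem~\ref{th:epsilon_mean}, namely Haar integration of a second moment of the global gate. First I would write $\epsilon_{p|q}(U)$ in the basis-explicit form that generalizes Lemma~\ref{th:epsilon_formula_basis}: an $n$-fold product $u_{\vec{v}_1}\cdots u_{\vec{v}_n}$ of the local averaging factors (each of the two-term type defined below eq.~(\ref{eq:f_indices})) contracted against a tensor $f^{p|q}_{\vec{v}_1,\ldots,\vec{v}_n}(U)$ that, exactly as in~(\ref{eq:f_indices}), is built from two copies of $U$ and two copies of $U^\dagger$ with index pairings dictated by the cut $p|q$. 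The crucial observation, already used in the tripartite case, is that $f^{p|q}(U)$ is quadratic in $U$ and quadratic in $U^\dagger$, i.e. a second moment of the global matrix, so the mean $\braket{\epsilon_1}$ is obtained simply by replacing $f^{p|q}(U)$ with its Haar average $\int dU\, f^{p|q}(U)$, the $u$-factors being already-integrated constants.

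For the unitary group I would apply the second-moment formula~(\ref{eq:symbolic_integration}) with $d$ replaced by the total dimension $D=d_1\ldots d_n$. This turns $\int dU\,f^{p|q}(U)$ into the two-term combination $\tfrac{1}{D(D^2-1)}[D(\cdots)-(\cdots)]$ of products of Kronecker deltas. Contracting these against the $u$-factors and the cut-deltas in $f^{p|q}$, and summing every index over $\{0,\ldots,d_i-1\}$, collapses the expression into products of subsystem dimensions. The bookkeeping factorizes over subsystems: for each subsystem there is a binary choice between a trace-like contraction contributing $d_i$ and one contributing $1$, so the cut-independent part assembles into $B=\prod_i(1+d_i)=\sum_{i_1,\ldots,i_n\in\{0,1\}}d_1^{i_1}\cdots d_n^{i_n}$, while the part sensitive to the location of the cut produces, after summation over all bipartitions, the quantity $C=\sum_{p|q}(d_p+d_q)$. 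Dividing by $2^{n-1}-1$ and simplifying the prefactors yields~(\ref{eq:mean_unitary_n}). I would check the result against Theorem~\ref{th:epsilon_mean} by specializing to $n=3$, where $(\prod_i\tfrac{1}{d_i+1})B=1$ and $3(D+1)-C=A$, recovering $\tfrac{2A}{3(D+1)}$.

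For the orthogonal group the strategy is identical, but the averaging uses the orthogonal analogue of~(\ref{eq:symbolic_integration}) established in Appendix~\ref{app:integration}. Because real matrices admit additional Wick-type pairings, this second-moment formula carries more contraction channels than its unitary counterpart, and the three-term denominator $(D-1)(D+2)$ together with the numerator $2^n(D+1)-2B+\tfrac{BD-2^n}{2^{n-1}-1}C$ arises from grouping these extra contractions. I expect this orthogonal calculation to be the main obstacle: the additional channels multiply the number of delta configurations to be tracked, and one must cleanly separate the cut-independent contributions (which assemble into $B$ and the explicit $2^n$, the latter being the number of ordered bipartitions of $\mathcal{H}'$) from the cut-dependent ones (which assemble into $C$ after summation over $p|q$). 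Maintaining a disciplined accounting of these channels, rather than any single delicate identity, is where the real work lies; once the surviving terms are correctly enumerated, the simplification to the stated closed form is routine.
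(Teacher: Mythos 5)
Your proposal follows essentially the same route as the paper's (deliberately terse) proof: generalize the basis-explicit formula of Lemma~\ref{th:epsilon_formula_basis} to $n$ parties, observe that $f^{p|q}(U)$ is a second moment of the global matrix, integrate via eq.~(\ref{eq:symbolic_integration}) with $d\to D$ for the unitary case and via the Weingarten formula of Appendix~\ref{app:integration} for the orthogonal case, and carry out the delta bookkeeping that produces $B$ and $C$. Your $n=3$ consistency check (using $B=\prod_i(1+d_i)$ and $3(D+1)-C=A$) is correct and is a useful addition not present in the paper.
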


Observe that if we set $n=2$, the expression (\ref{eq:mean_unitary_n}) is equivalent to the result of Zanardi for bipartite systems \cite{Zanardi_introduction}, up to the multiplicative factor $1/2$ due to different normalizations of generalized concurrence used.

\begin{corollary} \label{th:corollary_n}
The mean entangling power of $n$-qu$d$it gates $U\in U(d^n)$ reads
\begin{equation} \label{eq:mean_qudits_n}
\begin{split}
\braket{\epsilon_{1}}_{U(d^n)}=\frac{2^n(d^n+1)-2(d+1)^n}{(2^{n-1}-1)(d^n+1)}
\end{split}
\end{equation}
when averaged over the unitary group $U(d^n)$, and
\begin{equation} \label{eq:mean_qudits_orthog_n}
\begin{split}
\braket{\epsilon_{1}}_{O(d^n)}
	= \frac{\left[2^n(d^n+1)-2(d+1)^n\right]\left[d^n(d+1)^n-2^n\right]}
	{(2^{n-1}-1)(d^{2n}+d^n-2)(d+1)^n}
\end{split}
\end{equation}
when averaged over the orthogonal group $O(d^n)$. Furthermore, this entangling power is bounded from above by
\begin{equation} \label{eq:bound_qudits_n}
\begin{split}
    \tilde{\epsilon}_1 = 2 - \frac{2d^{n}}{(d+1)^{n}}\left[2^n-\frac{1}{(2^{n-1}-1)}
    \sum_{j=0}^{n}{{n}\choose{j}}\sum_{l=1}^{\floor{n/2}}{{n}\choose{l}}
    \frac{d^{n-|l-j|}-1}{d^{n-|l-j|}\:2^{\delta^l_{n/2}}}\right],
\end{split}
\end{equation}
where $\floor{\cdot}$ denotes the floor function.
\end{corollary}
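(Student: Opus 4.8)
The plan is to obtain all three formulas by specializing the general $n$-partite results of Theorem \ref{th:epsilon_mean_n} and Theorem \ref{th:epsilon_max_n} to the symmetric case $d_1=\ldots=d_n=d$ and simplifying. First I would collapse the auxiliary quantities of Theorem \ref{th:epsilon_mean_n}: setting every $d_i=d$ gives $\prod_{i=1}^n(d_i+1)^{-1}=(d+1)^{-n}$ and $D=d^n$, while the sum defining $B$ factorizes as $B=\prod_{i=1}^n(1+d)=(d+1)^n$. For $C=\sum_{p|q}(d_p+d_q)$ I would observe that summing $d_p+d_q$ over all \emph{unordered} bipartitions equals summing $d^{|p|}$ over all proper nonempty subsets $p$ of the parties, so that $C=\sum_{k=1}^{n-1}\binom{n}{k}d^k=(d+1)^n-d^n-1$.

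For the unitary mean (\ref{eq:mean_qudits_n}) I would substitute these into (\ref{eq:mean_unitary_n}): the factor $(d+1)^{-n}$ cancels $B=(d+1)^n$, leaving $\braket{\epsilon_1}_{U(d^n)}=2\big[1-\frac{(d+1)^n-d^n-1}{(2^{n-1}-1)(d^n+1)}\big]$, and placing this over a common denominator produces the numerator $2^{n-1}(d^n+1)-(d+1)^n$, which after the outer factor $2$ is exactly (\ref{eq:mean_qudits_n}). The orthogonal mean (\ref{eq:mean_qudits_orthog_n}) follows by the same substitution into the orthogonal formula of Theorem \ref{th:epsilon_mean_n}, using $(d^n-1)(d^n+2)=d^{2n}+d^n-2$ for the denominator and then simplifying the numerator. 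This last simplification --- verifying that the bracketed numerator of Theorem \ref{th:epsilon_mean_n} with $B,C,D$ inserted factors into $\frac{[2^n(d^n+1)-2(d+1)^n][d^n(d+1)^n-2^n]}{2^{n-1}-1}$ --- is the most tedious step and I expect it to be the main computational obstacle; it is a polynomial identity in $d$ that can be checked directly.

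For the upper bound (\ref{eq:bound_qudits_n}) I would first average over bipartitions. Since there are exactly $2^{n-1}-1$ bipartitions $p|q$, the definition in Theorem \ref{th:epsilon_max_n} reduces to $\tilde\epsilon_1=2-\frac{2d^n}{(d+1)^n}\big[2^n-\frac{1}{2^{n-1}-1}\sum_{p|q}\sum_{x'|y'}\frac{\min(d_{px'},d_{qy'})-1}{\min(d_{px'},d_{qy'})}\big]$. The remaining work is purely combinatorial: parametrize each unordered $p|q$ by the size $l=|p|\in\{1,\ldots,\floor{n/2}\}$ of its smaller block, which occurs with multiplicity $\binom{n}{l}$ for $l<n/2$ and $\tfrac12\binom{n}{n/2}$ for $l=n/2$ --- the latter encoded by the factor $2^{-\delta^l_{n/2}}$ --- and parametrize each ordered $x'|y'$ by $j=|x'|\in\{0,\ldots,n\}$ with multiplicity $\binom{n}{j}$. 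In the symmetric case one then has $\min(d_{px'},d_{qy'})=d^{\min(l+j,\,2n-l-j)}$.

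The one subtlety I would flag is reconciling this exponent with the $n-|l-j|$ appearing in (\ref{eq:bound_qudits_n}): the two are not equal termwise, but $\min(l+j,2n-l-j)=n-|l+j-n|$, and the reindexing $j\mapsto n-j$ (a bijection of $\{0,\ldots,n\}$ leaving $\binom{n}{j}$ invariant) carries $\sum_j\binom{n}{j}\frac{d^{\,n-|l+j-n|}-1}{d^{\,n-|l+j-n|}}$ onto $\sum_j\binom{n}{j}\frac{d^{\,n-|l-j|}-1}{d^{\,n-|l-j|}}$, so the inner sums agree. Inserting the resulting closed form of the double sum yields (\ref{eq:bound_qudits_n}) and completes the derivation.
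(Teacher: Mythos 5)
Your derivation is correct and is exactly the route the paper intends: the corollary is a direct specialization of Theorems \ref{th:epsilon_mean_n} and \ref{th:epsilon_max_n} to $d_1=\ldots=d_n=d$, with $B=(d+1)^n$, $C=(d+1)^n-d^n-1$, $D=d^n$, and the combinatorial bookkeeping of bipartition sizes (including the $2^{-\delta^l_{n/2}}$ halving and the $j\mapsto n-j$ reindexing turning $|l+j-n|$ into $|l-j|$) handled as you describe. Your intermediate expressions reproduce the paper's tripartite special cases ($2(d-1)^2/(d^2-d+1)$, $208/315$ for $O(8)$, and $2(d^2+d-2)/(d+1)^2$), confirming the computation.
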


Let us make three remarks regarding the above corollary.

Firstly, we note that the result (\ref{eq:mean_qudits_n}) regarding the mean over $n$-qu$d$it unitary group could also be obtained by considering an appropriately weighted average of means of entangling power classes introduced in \cite{Scott}.

Secondly, we observe that due to the effect of concentration of measure investigated recently in the context of the entangling power of  random unitary matrices \cite{multipartite_ent_p_minimum}, one can expect that for large dimension $d$ or the number of parties $n$ the value of the entangling power for a given unitary gate $U$ will typically be close to the averaged values derived above.

\begin{figure}[!tb]
	\centering
	\includegraphics[width=1\textwidth]{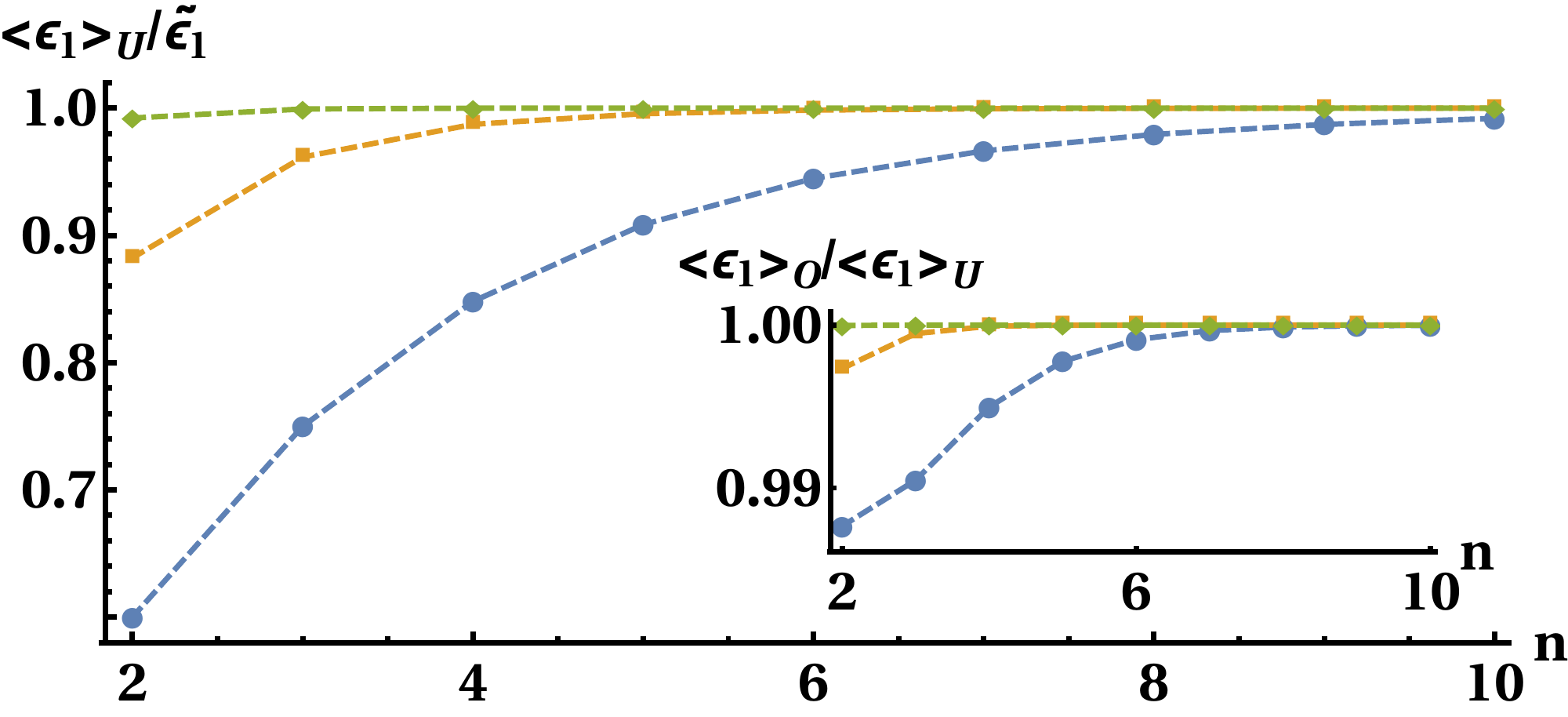}
	\captionsetup{width=1\textwidth}
	\caption{The ratio $\braket{\epsilon_{1}}_{U(d^n)}/\tilde{\epsilon}_1$ as a function of the number of partitions $n$ for partition dimensions $d=2$ -- circles (blue), $d=4$ -- squares (orange), $d=16$ -- diamonds (green). Lines joining the points have been plotted to guide the eye. In the inset: an analogous plot of the ratio $\braket{\epsilon_{1}}_{O(d^n)}/\braket{\epsilon_{1}}_{U(d^n)}$.}
	\label{fig:means_bound_comparison}
\end{figure}

Finally, we note that aside from the domain where $d$, $n$ are ``small'', the~mean values (\ref{eq:mean_qudits_n}), (\ref{eq:mean_qudits_orthog_n}) over the unitary and orthogonal groups are nearly identical. Indeed, one can check that
\begin{equation}
\begin{split}
\lim_{n\to\infty}\frac{\braket{\epsilon_{1}}_{O(d^n)}}{\braket{\epsilon_{1}}_{U(d^n)}}
    =\lim_{d\to\infty}\frac{\braket{\epsilon_{1}}_{O(d^n)}}{\braket{\epsilon_{1}}_{U(d^n)}}=1.
\end{split}
\end{equation}
Moreover, numerical simulations suggest similar results regarding the upper bound (\ref{eq:bound_qudits_n}) for the entangling power $\tilde{\epsilon}_1$. In other words, in any $n$-qu$d$it system with a sufficiently large dimension $d$ or number of partitions $n$, the values of the quantities $\braket{\epsilon_{1}}_{O(d^n)}$, $\braket{\epsilon_{1}}_{U(d^n)}$ and $\tilde{\epsilon}_1$ are practically indistinguishable. A~visual comparison of the three quantities, which further supports our claim, has been provided in Figure \ref{fig:means_bound_comparison}.

The above results generalize our findings for tripartite systems, showing that for higher dimensions a generic unitary matrix acting on $n$-partite systems is characterized by entangling power close to the maximal one. Hence, the corresponding random state determined in eq. (\ref{eq:U_state_n}) becomes asymptotically close the AME state of $2n$ parties.

\begin{example} \label{th:example_multipartite}
In \cite{G_gate}, the entangling properties of the $n$-qubit unitary gate
\begin{equation} \label{eq:diagonal_parametrization}
\begin{split}
G_n(\alpha)\coloneqq\diag\big(1,\ldots,1,e^{i\alpha}\big)
\end{split}
\end{equation}
have been analyzed. Note that $G_n(\alpha)$ is a generalization of the controlled sign gate, $G_n(\pi)=\diag\big(1,\ldots,1,-1\big)$.

Using our formalism, we were able to calculate the entangling power of the gate
\begin{equation} \label{eq:diagonal_parametrization}
\begin{split}
\epsilon_1\big(G_n(\alpha)\big)=c_n (1-\cos\alpha) \in [0,2c_n],
\end{split}
\end{equation}
where
\begin{equation} \label{eq:diagonal_parametrization}
\begin{split}
c_n=\frac{2^3}{6^n(2^n-1)}\sum_{p|q}
    \big(3^{n_p} - 2^{n_p}\big)
    \big(3^{n_q} - 2^{n_q}\big) 
\end{split}
\end{equation}
and $n_p$ and $n_q$ denote the number of qubits in partitions $p$, $q$. As seen, the closer the parameter $\alpha$ is to $\pi$, the more entangling the gate. This is, of course, exactly what one should expect, as $G_n(\pi)$ is intuitively the furthest from the non-entangling identity gate $G_n(0)$. In the particular case of three qubits, $G_3(\pi)=10/27\approx 0.37$.
\end{example} 

\section{Entangling properties of three-qubit unitary gates}\label{sec:gate_properties}
In Example \ref{th:example_tripartite}, we have considered three three-qubit gates: the Fredkin, Toffoli, and Deutsch gates. In this section, we generalize these results and characterize the entangling properties of generic three-qubit gates. We consider the following four classes of such gates: 
\begin{itemize}
\item permutation matrices $\mathcal{P}(8)$,
\item diagonal unitary matrices $\mathcal{D}(8)$,
\item unitary matrices $U(8)$,
\item orthogonal matrices $O(8)$.
\end{itemize}
We mention that the latter three are also known in the literature \cite{circular_ensembles_original,circular_ensembles_more} as the Circular Poissonian Ensemble (CPE), Circular Unitary Ensemble (CUE) and Circular Real Ensemble (CRE), respectively.

\paragraph{Permutations $\bm{\mathcal{P}(8)}$.} There are exactly $8!=40\:320$ three-qubit permutation matrices $P\in\mathcal{P}(8)$. Since this number is finite and relatively small, it is possible to calculate the entangling power (\ref{eq:epsilon_definition}) of every permutation. This yields exactly 21 different entangling classes, ranging from $0$ to $\max_{\mathcal{P}(8)}\epsilon_{1}=64/81\approx 0.79$, with mean $\braket{\epsilon_{1}}_{\mathcal{P}(8)}=\frac{184}{315}\approx 0.58$. The classification of permutations with respect to their entangling power is provided in~Table~\ref{tab:permutations} in Appendix \ref{app:permutations}.

\paragraph{Diagonal unitary matrices $\bm{\mathcal{D}(8)}$.} Any diagonal unitary $D\in \mathcal{D}(8)$ can be written as 
\begin{equation} \label{eq:diagonal_parametrization}
\begin{split}
D_{\vec{\varphi}}\coloneqq\diag\big(e^{i\varphi_1},\ldots,e^{i\varphi_8}\big),
\end{split}
\end{equation}
where $\varphi_i\in[0,2\pi)$ are taken with respect to the flat measure on the eight-torus. 

In this parametrization, the entangling power (\ref{eq:epsilon_definition}) reduces to a relatively short expression
\begin{equation} \label{eq:epsilon_diagonal}
\begin{split}
\epsilon_{1}(D_{\vec{\varphi}})
	= \frac{10}{27}
	- \frac{4}{81}\big(
& c^{14}_{23} + c^{16}_{25} + c^{17}_{35} 
	+ c^{28}_{46} + c^{38}_{47} + c^{58}_{67} \big) \\
	- \frac{1}{81} \big(
& c^{36}_{45} + c^{27}_{45} + c^{27}_{36} 
	+ c^{18}_{45} + c^{18}_{36} + c^{18}_{27} \big),
\end{split}
\end{equation}
where $c^{ij}_{kl}\coloneqq \cos \left(\varphi_i+\varphi_j-\varphi_k-\varphi_l\right)$. Since the average value of the cosine function over the whole period is 0, we can immediately state that the average entangling power of diagonal unitary matrices is equal to the constant term in~the~above expression, $\braket{\epsilon_{1}}_{\mathcal{D}(8)}=\frac{10}{27}\approx 0.37$. Curiously, this number is precisely equal to the entangling power of the Fredkin, Toffoli and the three-qubit controlled sign gates.

Finding the maximum value is a more complex task that requires optimization over $\varphi_i$. To this end, we introduce new variables $\omega_i$, $\delta_j$ and perform the following change of variables:
\begin{equation} \label{eq:diagonal_reparametrization}
\begin{split}
\vec{\varphi}\to \big\{&
	\omega_1,
	\omega_1+\omega_2+\delta_1,
	\omega_3,
	\omega_2+\omega_3,
	-\omega_2-\omega_3+\omega_4+\delta_2,\\
& -\omega_3+\omega_4-\delta_3,
	-\omega_1-\omega_2+\omega_4-\delta_1,
	-\omega_1+\omega_4\big\},
\end{split}
\end{equation}
Note that while this operation reduces the number of parameters from eight to seven, it comes with no loss of generality, as the global phase of unitary gates is irrelevant for physical considerations (including the entangling power). In other words, one of the eight parameters in equation (\ref{eq:diagonal_parametrization}) has been redundant from the start. 

In the new parametrization (\ref{eq:diagonal_reparametrization}), the formula for the entangling power (\ref{eq:epsilon_diagonal}) takes a particularly simple form,
\begin{equation} \label{eq:epsilon_diagonal_deltas}
\begin{split}
\epsilon_{1}(D_{\vec{\delta}})= \frac{1}{81}\bigg[&
	29 - 8\cos\delta_1 - 2\cos\delta_2 - 2\cos\delta_3 
	- 8\cos(\delta_1+\delta_2+\delta_3) \\
& -4\cos(\delta_1+\delta_2)-4\cos(\delta_1+\delta_3)-\cos(\delta_2+\delta_3)\bigg].
\end{split}
\end{equation}
Notably, it depends only on the three parameters $\delta_i$. This should not be surprising: a three-qubit gate can be decomposed into a tensor product of three one-qubit gates if and only if it is possible to assign a definite phase differences between the three subgates. Thus, the entangling power of the gate is proportional to how ``difficult'' it is to assign these three phases, which is  measured by the parameters $\delta_i$.

With just three independent parameters, it is possible to find all the extremal points of the entangling power (\ref{eq:epsilon_diagonal_deltas}), i.e. points in which all of its first derivatives vanish. Since the cube $[0,2\pi]^3\ni(\delta_1,\delta_2,\delta_3)$ is a closed and bounded set, one of the extremal points has to be the global maximum of the function.

Using this method, we find that the maximum of the entangling power over diagonal unitary matrices is equal to $\max_{\mathcal{D}(8)}\epsilon_{1}=16/27\approx 0.59$ and is obtained solely by diagonal unitary matrices of the form
\begin{equation}
\begin{split}
D_{\vec{\omega}}\coloneqq\diag\big[&
	e^{i\omega_1},
	-e^{i(\omega_1+\omega_2)},
	e^{i\omega_3},
	e^{i(\omega_2+\omega_3)},
	e^{-i(\omega_2+\omega_3-\omega_4)},\\
& e^{-i(\omega_3-\omega_4)},
	-e^{-i(\omega_1+\omega_2-\omega_4)},
	e^{-i(\omega_1-\omega_4)}\big],
\end{split}
\end{equation}
for example, the diagonal hermitian matrix
\begin{equation}
\begin{split}
H_{\mathcal{D}(8)}=\diag(1,1,1,-1,1,-1,-1,-1).
\end{split}
\end{equation}

\paragraph{Unitary matrices $\bm{U(8)}$.} In the case of random unitary matrices, using Corollary \ref{th:corollary} we immediately arrive at the mean value $\braket{\epsilon_{1}}_{U(8)}=2/3\approx 0.67$, as well as the maximum $\max_{U(8)}\epsilon_{1}=\tilde{\epsilon}_{1}=8/9\approx 0.89$. An example maximizing unitary matrix is given by the hermitian matrix
\begin{equation} \label{eq:unitary_AME_matrix}
\begin{split}
H_{U(8)}=\frac{1}{\sqrt{2^3}}\begin{bmatrix}
-1 & -1 & -1 & 1 & -1 & 1 & 1 & 1 \\
-1 & -1 & -1 & 1 & 1 & -1 & -1 & -1 \\
-1 & -1 & 1 & -1 & -1 & 1 & -1 & -1 \\
1 & 1 & -1 & 1 & -1 & 1 & -1 & -1 \\
-1 & 1 & -1 & -1 & -1 & -1 & 1 & -1 \\
1 & -1 & 1 & 1 & -1 & -1 & 1 & -1 \\
1 & -1 & -1 & -1 & 1 & 1 & 1 & -1 \\
1 & -1 & -1 & -1 & -1 & -1 & -1 & 1
\end{bmatrix}.
\end{split}
\end{equation}
which arises from direct application of the corollary to the AME(6,2) state provided in eq. (10) in \cite{AME_state}.

To put the things into a wider perspective, let us remind the reader that the amount of entanglement carried by the W state (\ref{eq:W}) is also equal to $8/9$. 
This implies that the action of the maximizing unitary gates
on random separable states produces entangled states with average
$\tau_1$  equal to the entanglement characteristic of the state $\ket{W}$.

\paragraph{Orthogonal matrices $\bm{O(8)}$.} Since the unitary matrix (\ref{eq:unitary_AME_matrix}) maximizing the entangling power over the unitary group $U(8)$ is orthogonal in addition to being unitary, it immediately follows that the maximum entangling power over the orthogonal group is equal to $\max_{O(8)}\epsilon_{1}=\max_{U(8)}\epsilon_{1}=8/9$. As for the mean value, making use of Corollary \ref{th:corollary} once again we find $\braket{\epsilon_{1}}_{O(8)}=208/315\approx 0.66$.

\vspace{5mm}

The results of this section are summarized in Table \ref{tab:gate_properties} and further illustrated in Figure \ref{fig:histogram} showing a probability histogram of the entangling power for ensembles of $8!$ permutation matrices of size eight, the same number of random diagonal unitary matrices, random orthogonal matrices and random unitary matrices distributed according to the Haar measure.

\begin{table}[!tb]
\captionsetup{width=0.95\textwidth}
\caption{Summary of the entangling properties of three-qubit gates, taken from ensembles of: diagonal unitary matrices $\mathcal{D}(8)$, permutation matrices $\mathcal{P}(8)$, random orthogonal matrices generated according to the Haar measure on $O(8)$ and random unitary matrices from $U(8)$.}
\centering
\begin{tabular}{|c|c|c|c|c|} 
 \hline
  & \thead{$\mathcal{D}(8)$} & \thead{$\mathcal{P}(8)$}
  & \thead{$O(8)$} & \thead{$U(8)$} \\
 \hline
 $\min\epsilon_{1}$ & $0$ & $0$ & $0$ & $0$ \\
 $\braket{\epsilon_{1}}$ & $10/27\approx 0.37$
 & $184/315\approx 0.58$ & $208/315\approx 0.66$ & $2/3\approx 0.67$ \\ 
 $\max\epsilon_{1}$ & $16/27\approx 0.59$ 
 & $64/81\approx 0.79$ & $8/9\approx 0.89$ & $8/9\approx 0.89$ \\
 \hline\hline
\end{tabular}
\label{tab:gate_properties}
\end{table}

\begin{figure}[!tb]
	\centering
	\includegraphics[width=1\textwidth]{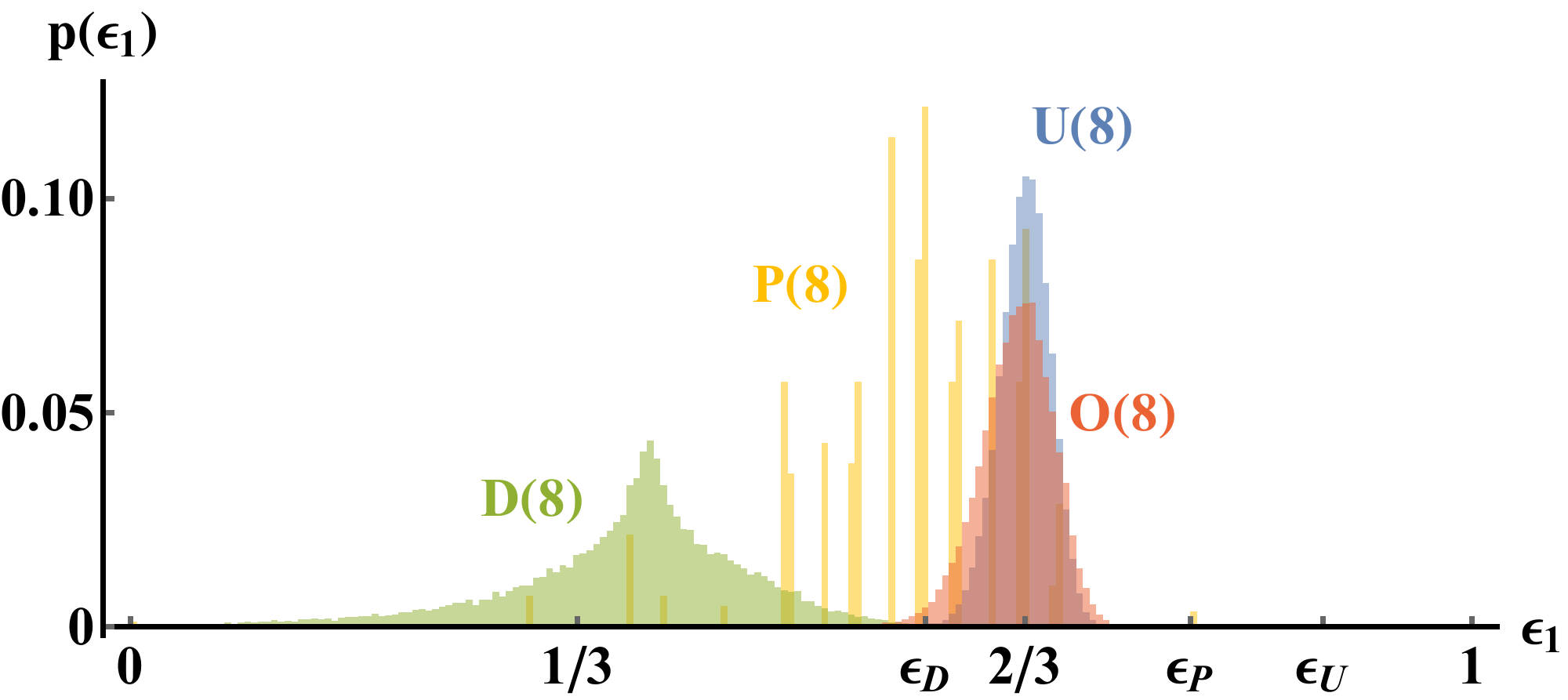}
	\captionsetup{width=1\textwidth}
	\caption{Probability histograms of the entangling power $\epsilon_1$ of all $40\:320$ permutation matrices $P$ of size eight and the same number of diagonal unitary matrices $D$, orthogonal matrices $O$, and generic unitary matrices $U$. In~addition, the maximum values over the respective ensembles have been denoted by $\epsilon_D=16/27$, $\epsilon_P=64/81$, $\epsilon_U=\epsilon_O=8/9$.}
	\label{fig:histogram}
\end{figure}

\section{Concluding remarks}\label{sec:summary}
In this work we have studied the entangling properties of multipartite unitary gates with respect to the chosen measure of entanglement $\tau_1$. We have derived an analytical expression for the entangling power of a tripartite gate as an explicit function of the gate, linking the entangling power of gates acting in tripartite Hilbert space of dimension $d_1d_2d_3$ to the entanglement properties of states in the extended Hilbert space of dimension $(d_1d_2d_3)^2$. Building upon these results, we have computed the mean value of the entangling power of tripartite unitary gates of an arbitrary size and provided an upper bound for the maximum, which we have linked to the AME states in the extended six-party Hilbert space.

These results were then generalized to unitary gates acting on $n$ parties. In~particular, we have found that a gate $U$ acting on $\mathcal{H}_d^{\otimes n}$ which saturates the upper bound for entangling power corresponds to an AME state of $2n$ subsystems with $d$ levels each.

Finally, we have employed our findings to analyze in detail the entangling properties of relevant classes of three-qubit unitaries. We have shown that a generic unitary gate of size $2^3=8$ typically has a slightly larger entangling power than a generic orthogonal gate and much larger entangling power than typical permutation matrices or diagonal unitary matrices.

Based on this work, we propose two main directions for future research. Firstly, due to the existence of two inequivalent classes of maximally entangled three-qubit states, in addition to one-tangle studied in this contribution, there are two more valid measures of three-qubit entanglement. It would be especially interesting to find a formula analogous to ours for so-called \emph{three-tangle}, which measures genuine tripartite entanglement in the state. We stress that due to the way these measures are defined, finding the expression for the entangling power with respect to one of them would immediately yield the expression for the other upon the use of our formula. 

Secondly, the nonlocal properties of a given unitary gate $U$ applied $k$-times, and the influence of the local interlacing dynamics $V_{\textnormal{loc}}$ was recently studied for a bipartite setup \cite{interlacing2,interlacing1,A19}, where the quantity $e_p\big((UV_{\textnormal{loc}})^k\big)$, closely related to the entangling power, was investigated. It would be interesting to extend some of these results also to the multipartite case.

\section*{Acknowledgements}
It is a pleasure to thank Felix Huber, Arul Lakshminarayan and Nikolai Wyderka for fruitful discussions and correspondence. Financial support by Narodowe Centrum Nauki under the grant number DEC-2015/18/A/ST2/00274 is gratefully acknowledged. In addition, Tomasz Linowski would like to acknowledge partial support by the Foundation for Polish Science (IRAP project, ICTQT, contract no. 2018/MAB/5, co-financed by EU within Smart Growth Operational Programme).

\renewcommand\theadalign{bc}
\renewcommand\theadfont{\normalsize}
\renewcommand\theadgape{\Gape[4pt]}
\renewcommand\cellgape{\Gape[4pt]}

\begin{appendices}

\section{Averages over orthogonal group} \label{app:integration}
\setcounter{equation}{0}
\renewcommand{\theequation}{\ref{app:integration}\arabic{equation}}
In order to find second moments of the orthogonal group $O(d)$, it is necessary to find proper orthogonal Weingarten functions, $\langle Wg^O(q),r \rangle$ for two permutations $q$ and $r$.
These were first provided in \cite{orthogonal2006} and further extended for certain setups in \cite{orthogonal_physics,orthogonal_msc}. All the equations in this Appendix are based on these two extensions.

The desired integral can be expanded by these means to the form:
\begin{equation}\label{eq:weingarten_first}
\begin{split}
\int_{O(d)} & dO O^{i_1}_{j_1} O^{i_2}_{j_2} O^{i_3}_{j_3} O^{i_4}_{j_4} = \\
    &=\sum_{q,r \in \{p_1,p_2,p_3\}} \langle Wg^O(q),r \rangle \delta_{i_1}^{i_{q(1)}}\delta_{i_2}^{i_{q(2)}}\delta_{i_3}^{i_{q(3)}}\delta_{i_4}^{i_{q(4)}}\delta_{j_1}^{j_{r(1)}}\delta_{j_2}^{j_{r(2)}}\delta_{j_3}^{j_{r(3)}}\delta_{j_4}^{j_{r(4)}}, 
\end{split}
\end{equation}
where the sum is over all three possible permutations, created by two transpositions: $p_1 = \{(12)(34)\}$, $p_2 = \{(13)(24)\}$ and $p_3 =\{(14)(23)\}$.

The Weingarten functions $\langle Wg^O(q),r \rangle$ are calculated by joining the two permutations $q$ and $r$ into a new permutation $qr$. This new permutation can be uniquely broken into cycles, whose lengths are the only property relevant for our purposes. For each cycle with length $l$ in the structure of $qr$, there is another cycle of length $l$ -- see Lemma 1.16 in \cite{orthogonal_msc}. By taking half of these cycles, the value of $\langle Wg^O(q),r \rangle$ is uniquely determined. 

As an example, we consider $\langle Wg^O(p_1),p_1 \rangle$ with an involution $p_i$ defined above:
    \begin{equation}
        p_1p_1 =  \{(12)(34)\} \{(12)(34)\} =  \{(1)(2)(3)(4)\},
    \end{equation}
so that the joint permutation $p_1p_1$ consists of four cycles of length one. Taking half of these cycles yields two  cycles of length 1, which we write as $[1,1]$.
In \cite{orthogonal_msc} (App. B2) one can find this value:
\begin{equation}
    \langle Wg^O(p_1),p_1 \rangle = Wg^O([1,1],d) = \frac{d+1}{d(d-1)(d+2)}.
\end{equation}
The same result holds for $\langle Wg^O(p_i),p_i \rangle$ with any of the above $p_i$. If the permutations $p_i$ and $p_j$ are different, then the permutation $p_i p_j$ consists of two cycles of length 2 and so
\begin{equation}
    \langle Wg^O(p_i),p_j \rangle = Wg^O([2],d) = \frac{-1}{d(d-1)(d+2)} \text{ for $i\neq j$}.
\end{equation}

Inserting the above formulas into the desired integral (\ref{eq:weingarten_first}), we obtain the required formula
\begin{equation} \label{eq:symbolic_integration_orthog}
\begin{split}
\int_{O(d)} dO O^{i_1}_{j_1} O^{i_2}_{j_2} O^{i_3}_{j_3} O^{i_4}_{j_4}
	= \:&\frac{\left(
	\delta^{i_2}_{i_1}\delta^{i_4}_{i_3}\delta^{j_2}_{j_1}\delta^{j_4}_{j_3}
	+\delta^{i_3}_{i_1}\delta^{i_4}_{i_2}\delta^{j_3}_{j_1}\delta^{j_4}_{j_2}
	+\delta^{i_4}_{i_1}\delta^{i_3}_{i_2}\delta^{j_4}_{j_1}\delta^{j_3}_{j_2}
	\right)}{d(d-1)}\\
& -	\frac{\left(
	\delta^{i_2}_{i_1}\delta^{i_4}_{i_3}\delta^{j_3}_{j_1}\delta^{j_4}_{j_2}
	+\delta^{i_2}_{i_1}\delta^{i_4}_{i_3}\delta^{j_4}_{j_1}\delta^{j_3}_{j_2}
	+\delta^{i_3}_{i_1}\delta^{i_4}_{i_2}\delta^{j_2}_{j_1}\delta^{j_4}_{j_3}
	\right)}{d(d-1)(d+1)}\\
& -	\frac{\left(
	\delta^{i_3}_{i_1}\delta^{i_4}_{i_2}\delta^{j_4}_{j_1}\delta^{j_3}_{j_2}
	+\delta^{i_4}_{i_1}\delta^{i_3}_{i_2}\delta^{j_2}_{j_1}\delta^{j_4}_{j_3}
	+\delta^{i_4}_{i_1}\delta^{i_3}_{i_2}\delta^{j_3}_{j_1}\delta^{j_4}_{j_2}
	\right)}{d(d-1)(d+1)},
\end{split}
\end{equation}
used to derive expression (\ref{eq:epsilon_mean_orthog}).

\section{Three-qubit permutation matrices} \label{app:permutations}
\setcounter{table}{0}
\renewcommand{\thetable}{\ref{app:permutations}\arabic{table}}

There exist $8! = 40320$ permutation matrices of order $2^3 = 8$. For each of them we have found their entangling power $\epsilon_1$ and identified $21$ possible values. Their list and the number of elements in each class are presented in Table \ref{tab:permutations}, analogous to the table presented earlier in \cite{permutations_entangling_power} for the bipartite case of two-qutrit system.

\begin{table}[H] 
\captionsetup{width=0.7\textwidth}
\caption{Classification of~the~entangling power of~three-qubit permutation gates.}
\centering
\begin{tabular}{|c|c||c|c|}
 \hline
 \thead{entangling \\ power $\epsilon_{1}$ \\ times $162$} & \thead{number \\ of elements \\ in class}
 & \thead{entangling \\ power $\epsilon_{1}$ \\ times $162$} & \thead{number \\ of elements \\ in class}
 \\ 
 \hline
 $0$ & 48 & $95$ & 3456\\ 
 $48$ & 288 & $96$ & 4896 \\ 
 $60$ & 864 & $99$ & 2304 \\
 $64$ & 288 & $100$ & 2880 \\
 $72$ & 192 & $104$ & 3456 \\
 $79$ & 2304 & $107$ & 2304 \\
 $80$ & 1440 & $108$ & 3744 \\
 $84$ & 1728 & $111$ & 384 \\
 $87$ & 1536 & $112$ & 1152 \\
 $88$ & 2304 & $128$ & 144 \\ 
 $92$ & 4608 & & \\
 \hline\hline
\end{tabular}
\label{tab:permutations}
\end{table}
\end{appendices}

\newpage
\bibliography{report}{}
\addcontentsline{toc}{chapter}{Bibliography}
\bibliographystyle{custombib}

\end{document}